\documentclass[11pt]{article}

\usepackage{latexsym, amscd, amsfonts, eucal, mathrsfs, amsmath, amssymb, amsthm, stmaryrd, fullpage}
\usepackage{tikz}
\usetikzlibrary{calc,positioning,fit}
\usetikzlibrary{fit,svg.path,positioning}

\def\la{\langle}
\def\ra{\rangle}
\def\p{\prime}

\def\bZ{\mathbb Z}

\def\bN{\mathbb N}

\def\sse{\subseteq}

\def\lm{\lambda}

\def\sm{\sigma}

\def\ot{\otimes}

\def\id{\mathrm{id}}

\def\bap{\bigcap}
\def\bup{\bigcup}

\def\ALL{\mathrm{ALL}}

\def\CONS{\mathrm{CONS}}
\def\SWAP{\mathrm{SWAP}}
\def\hyp{\textrm{-}}
\def\lcm{\mathrm{lcm}}
\def\CC{\mathrm{CC}}

\newtheorem{thm}{Theorem}[section]
\newtheorem{lemma}[thm]{Lemma}
\newtheorem{prop}[thm]{Proposition}
\newtheorem{coro}[thm]{Corollary}

\theoremstyle{definition}

\newtheorem{defn}[thm]{Definition}
\newtheorem{eg}[thm]{Example}

\newtheorem{notn}[thm]{Notation}

\newtheorem{rmk}[thm]{Remark}

\hyphenation{op-tical net-works semi-conduc-tor}

\begin{document}
\title{Some Results on Reversible Gate Classes Over Non-Binary Alphabets}

\author{Yuzhou Gu\thanks{MIT. \ Email: yuzhougu@mit.edu.}}
\date{}
\maketitle

\begin{abstract}
We present a collection of results concerning the structure of reversible gate classes over non-binary alphabets, including (1) a reversible gate class over non-binary alphabets that is not finitely generated (2) an explicit set of generators for the class of all gates, the class of all conservative gates, and a class of generalizations of the two (3) an embedding of the poset of reversible gate classes over an alphabet of size $k$ into that of an alphabet of size $k+1$ (4) a classification of gate classes containing the class of $(k-1,1)$-conservative gates, meaning gates that preserve the number of occurrences of a certain element in the alphabet. 
\end{abstract}
\tableofcontents

  \section{Introduction}
  The ``pervasiveness of universality'', described in Aaronson et al.\ \cite{AGS15}, is the phenomenon that in a class of operations, a small number of simple operations is likely to generate all operations.
  This phenomenon is central in the theory of computation. 
  Yet in some cases, systems fail to be universal, and different ways of failure reveal a rich structure.
  An example is Post's lattice \cite{Pos41}, one of the most remarkable results in the early age of computer science, which is a complete description of all ways in which a set of classical gates over the binary alphabet can fail to be universal. 
  
  Inspired by Post's lattice, Aaronson et al.\ \cite{AGS15} proposed the problem of classifying all quantum gate classes. However, quantum gate classes turn out to be much more complicated than classical gates classes.
  Therefore, instead of studying quantum gates directly, they studied reversible gates, which are in some sense the correct classical analogue of quantum gates. 
  In op. cit., Aaronson et al.\ gave a complete classification of reversible gate classes over the binary alphabet. 
  They also pointed out several directions for further research. 
  One direction is to classify stabilizer operations over qubits, which has been completed recently by Grier and Schaeffer \cite{GS16}.
  The current paper, taking another direction, studies the behavior of reversible gate classes when the alphabet is not binary.
  
  \subsection{Our results}
  Our results include the following:
  \begin{enumerate}
  \item A reversible gate class over non-binary alphabets that is not generated by a finite subset of reversible gates (Theorem \ref{NonFinGenClass}). This answers a question of Aaronson et al.\ \cite{AGS15}.
  \item An explicit set of generators for the class of all gates, the class of all conservative gates (Theorem \ref{AllGenerate} and Theorem \ref{ConsGen}). We define generalizations of these two classes (Definition \ref{LmCons}), and give an explicit set of generators for them (Theorem \ref{LmConsGeneration}).
  \item An embedding of the poset of reversible gate classes over an alphabet of size $k$ into the poset of reversible gate classes over an alphabet of size $k+1$ (Theorem  \ref{PosetEmbedding}). 
  \item A classification of reversible gate classes containing the class of reversible $(k-1,1)$-conservative gates (Theorem \ref{CONSInteger}). 
  \end{enumerate}
  These results show that there are differences and similarities between reversible gate classes over the binary alphabet and those over a non-binary alphabet.
  It is known that 1) and 4) do not hold in the binary case. 
  Yet, 2) and 4) can be seen as a generalization of the corresponding results in the binary case.
  
  \subsection{Related work}
  Aaronson et al.\ \cite{AGS15} completely classified reversible gate classes over the binary alphabet. 
  Many of their methods for proving generation results apply to non-binary case, e.g. \cite{AGS15} Theorem 23, Theorem 25, Theorem 41.
  Je{\v r}{\'a}bek \cite{Jer14} gave a reversible clone-coclone duality (called ``master clone-coclone duality'' in that work), generalizing the classical clone-coclone duality. Je{\v r}{\'a}bek's result works for non-binary alphabets.
  
  In this paper reversible gate classes are closed under the ancilla rule by definition (Definition \ref{GenRules}). In some previous work the ancilla rule is removed. 
  Boykett \cite{Boy15} gave a finite set of generators with no ancillas for the class of all reversible gates when the alphabet size is odd. Boykett's result implies Theorem \ref{AllGenerate} when $|A|$ is odd. 
  Boykett et al.\ \cite{BKS16} gave several non-finitely generated gate classes when borrowed symbols (called ``borrowed bits'' in the paper) are allowed but ancilla symbols are not allowed. 
  Because borrowed symbols are weaker than ancilla symbols, Boykett et al.'s result does not give a non-finitely generated gate class when ancilla symbols are allowed. 
  
  Aaronson et al.\ \cite{AGS15} and Xu \cite{Xu15} studied the number of ancilla bits (or borrowed bits) used in generation over the binary alphabet. In this paper we do not try to minimize the number of ancilla symbols used. 
  
  
  \section{Acknowledgments}
This work is supported in part by MIT SuperUROP Undergraduate Research Program and MIT Lincoln Laboratory.
The author wishes to express his thanks to Prof. Scott Aaronson for proposing this interesting topic and giving useful advice, and to Luke Schaeffer for helpful mentoring.
The author also thanks Cameron Musco, Dr. Kevin Obenland and Prof. Yury Polyanskiy for helpful discussions and Thalia Rubio for writing advice.
  
  \section{Preliminaries}
  In this section we give basic definitions of the object of study.
  
  Fix a finite set $A$ with $|A|\ge 2$. $A$ is called the \textbf{alphabet}.
  A theory of reversible gates can be defined for arbitrary $A$, including infinite $A$. However, we restrict our attention to the case $A$ is finite in this paper. 
  
  By a \textbf{classical gate} we mean a function of sets $A^k\to A$ for some $k\in \bN$. (We take $\bN=\bZ_{\ge 0}$.) In contrast, a reversible gate is a reversible function, which then requires the domain and the codomain to have the same cardinality.
  \begin{defn}
  A \textbf{reversible gate} is a bijective function $A^k\to A^k$ for some $k\in \bN$.
  \end{defn}

  \begin{notn}\label{GateDefn}
  We give some examples of gates and build notations that will be used in later sections. 
  
  Let $u,v\in A^k$ be two strings. Define reversible gate $\tau_{u,v}:A^k\to A^k$ to be the function that maps $u$ to $v$, $v$ to $u$, and all other inputs to themselves.
  
  Let $u\in A^k$, $v\in A^l$ be two strings. Define reversible gate $\SWAP_{u,v}:A^{k+l}\to A^{k+l}$ to be the function that maps $uv$ to $vu$, $vu$ to $uv$, and all other inputs to themselves. 
  In other words, $\SWAP_{u,v}=\tau_{uv,vu}$.
  
  Let $F:A^k\to A^k$ be a gate and $w\in A^l$ be a string. 
  Define reversible gate $w\hyp F:A^{k+l}\to A^{k+l}$ that maps $wu$ to $wF(u)$ where $u\in A^k$, and maps all other inputs to themselves.
  \end{notn}
  
  \begin{rmk}
  Let $A=\{0,1\}$. Then $11\hyp \tau_{0,1}$ is the Toffoli gate.
  $1\hyp \SWAP_{0,1}$ is the Fredkin gate. 
  \end{rmk}

  In the theory of classical gates, there is a notion of generation of gates, which refers to the process of creating new gates from existing gates. Similarly, in the reversible case, we also have generation of gates. 
  
  \begin{defn}\label{GenRules}
  There are several ways in which reversible gates can be generated.
  \begin{enumerate}
  \item \textbf{Permutation rule.} Let $k$ be a non-negative integer and $\sm$ be a permutation of $\{1,\ldots,k\}$. We define the permutation gate $P_\sm:A^k\to A^k$ that maps $(a_1,\ldots,a_k)\in A^k$ to $(a_{\sm(1)},\ldots,a_{\sm(k)})\in A^k$. Permutations of symbols come for free: we can generate $P_\sm$ from nothing. 
  
  \item \textbf{Tensor product rule.} Assume we have two reversible gates $F:A^k\to A^k$, $G:A^l\to A^l$. Then we can generate their tensor product $F\ot G:A^{k+l}\to A^{k+l}$ which sends $(a_1,\ldots,a_k,b_1,\ldots,b_l)\in A^{k+l}$ to $(F(a_1,\ldots,a_k),G(b_1,\ldots,b_l))$. 
  
  \item \textbf{Composition rule.} Assume that we have two reversible gates  $F:A^k\to A^k$, $G:A^k\to A^k$. Then we can generate composition $F\circ G:A^k\to A^k$ which sends $(a_1,\ldots,a_k)\in A^k$ to $F(G(a_1,\ldots,a_k))$.
  
  \item \textbf{Ancilla rule.} Assume that we have a reversible gate $F:A^k\to A^k$. Assume there exists $0\le l\le k$, $(a_1,\ldots,a_l)\in A^l$ and a reversible gate $F^\p:A^{k-l}\to A^{k-l}$ such that for any $(b_1,\ldots,b_{k-l})\in A^l$, $F$ maps $(a_1,\ldots,a_l,b_1,\ldots,b_{k-l})$ to $(a_1,\ldots,a_l,F^\p(b_1,\ldots,b_{k-l}))$. 
  Then $F^\p$ can be generated from $F$. 
  \end{enumerate}
  \end{defn}

  \begin{rmk}
  Generation of reversible gates can be understood as the following process. 
  \begin{enumerate}
  \item Start with $n$ symbols. These are the inputs and we do not have control over their initial values. 
  \item Introduce $m$ ancilla symbols. The initial values of these symbols are decided by us.
  \item Apply some sequence of reversible gates. An application of a reversible gate $G$ with $k\le n+m$ inputs is first choosing $k$ different symbols among the $n+m$ ones, and then applying $G$ on the $k$ chosen symbols. 
  \item After step (3) finishes, make sure that the values of the $m$ ancilla symbols are the same as their initial values. (However, their values can change during step (3).) Remove the $m$ ancilla symbols. 
  \item  The set of all configurations of the $n$ input symbols is $A^n$. 
  The above process gives a function from the set of initial configurations to the set of final configurations, written as $F:A^n\to A^n$. 
  $F$ is actually a reversible gate, and is the reversible gate generated by this process.
  \end{enumerate}
  
  This way of understanding generation of reversible gates is very helpful and is used throughout this paper. 
  \end{rmk}
  
  \begin{rmk}
  The first three generation rules are natural and easy to understand. For a detailed discussion of the ancilla rule, see \cite{AGS15}, Section 1.2. 
  
  The set of generation rules given in Definition \ref{GenRules} is equivalent to the one in \cite{AGS15}, Section 2.2. 
  \end{rmk}
  
  With the generation rules, we can define gate classes. 
  \begin{defn}
  Let $S$ be a set of reversible gates. 
  We say $S$ is a \textbf{reversible gate class} if $S$ is closed under the generation rules in Definition \ref{GenRules}.
  \end{defn}
  \begin{defn}
  Let $S$ be a set of reversible gates. Define $\la S\ra$ to be the smallest reversible gate class containing $S$. We say $\la S\ra$ is the \textbf{reversible gate class generated by} $S$.
  Let $F$ be a reversible gate. We say \textbf{$S$ generates $F$} if $F\in \la S\ra$. 
  \end{defn}
  
  \begin{rmk}
  $\la S\ra$ is equal to the intersection of all reversible gate classes containing $S$.
  \end{rmk}
  
  \begin{eg}
  We give some examples of reversible gate classes. 
  
  Let $\ALL$ be the set of all reversible gates. Then $\ALL$ is a reversible gate class. When $|A|=2$, $\ALL$ is the reversible gate class generated by the Toffoli gate.
  
  Let $\CONS$ be the set of all conservative gates, i.e. reversible gates $F$ such that for every $a\in A$ and every input to $F$, the number of $a$'s in the input of $F$ is the same as the number of $a$'s in the output. Then $\CONS$ is a reversible gate class.
  When $|A|=2$, $\CONS$ is the reversible gate class generated by the Fredkin gate. 
  \end{eg}
  
  The set of reversible gate classes has a natural structure of a poset. 
  \begin{defn}
  We give the set of all reversible gate classes a partial order $\le$. For two reversible gate classes $S$, $T$, we have $S\le T$ if and only if $S\sse T$ as sets. 
  \end{defn}
  The structure is actually more than a poset, as shown in the following proposition. 
  \begin{prop}
  Let $L$ be the poset of reversible gate classes. 
  Then $L$ is a complete lattice. 
  \end{prop}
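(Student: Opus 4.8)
The plan is to verify the definition of a complete lattice directly: I must show that every family $\{S_i\}_{i\in I}$ of reversible gate classes admits both a greatest lower bound and a least upper bound in $L$. The whole argument rests on a single closure lemma, after which meets and joins both follow by routine poset reasoning.

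First I would prove that an arbitrary intersection $\bigcap_{i\in I} S_i$ of reversible gate classes is again a reversible gate class. This amounts to checking that each of the four generation rules of Definition \ref{GenRules} is preserved under intersection. For the permutation rule this is immediate, since every $P_\sm$ lies in each $S_i$ and hence in the intersection. For the tensor product, composition, and ancilla rules, each rule produces an output gate from one or two input gates; if the inputs lie in $\bigcap_i S_i$ then they lie in every $S_i$, so by closure of each $S_i$ the output lies in every $S_i$ and therefore in the intersection. When $I=\es$ the intersection is all of $\ALL$, which is a gate class, so this degenerate case is covered as well.

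Granting the closure lemma, meets are immediate: $\bigcap_i S_i$ is a gate class contained in each $S_i$, and any gate class $T$ with $T\sse S_i$ for all $i$ satisfies $T\sse \bigcap_i S_i$; hence $\bigcap_i S_i$ is the greatest lower bound. For joins I would set $\bigvee_i S_i:=\la\bigcup_i S_i\ra$, the gate class generated by the union. Since $\ALL$ is a gate class containing $\bigcup_i S_i$, the collection of gate classes containing the union is nonempty, and by the remark that $\la\cdot\ra$ coincides with the intersection of all containing gate classes together with the closure lemma, its intersection $\la\bigcup_i S_i\ra$ is again a gate class. It contains each $S_i$, and any gate class $T$ containing all the $S_i$ contains their union and hence the smallest gate class over it, so $\la\bigcup_i S_i\ra$ is the least upper bound.

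This is a routine verification rather than a deep result, and I do not anticipate any serious obstacle. The only points requiring mild care are the edge cases supplying the extremal elements: the meet of the empty family is the top element $\ALL$, and the join of the empty family is the bottom element $\la\es\ra$, the class generated by nothing (which by the permutation rule consists of the permutation gates). Beyond these, the content of the proof is entirely concentrated in verifying that the four generation rules pass to intersections.
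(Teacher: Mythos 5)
Your proposal is correct and takes essentially the same approach as the paper, which simply identifies the meet as $\bigcap_{i\in I}S_i$ and the join as $\la\bigcup_{i\in I}S_i\ra$; you fill in the closure-under-intersection lemma and the edge cases that the paper leaves implicit.
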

  \begin{proof}
  Let $\{S_i:i\in I\}$ be a set of reversible gate classes where $I$ can be infinite.
  Then the join of $\{S_i\}$ is $\la \bup_{i\in I} S_i\ra$ and the meet of $\{S_i\}$ is $\bap_{i\in I} S_i$.
  \end{proof}
  \begin{notn}
  In the following sections, when there is no ambiguity, we sometimes say ``gate'' instead of ``reversible gate'', and say ``gate class'' instead of reversible gate class.
  \end{notn}

  \section{Non-finite generation}
  When $|A|=2$, it follows from Aaronson et al.'s classification that every gate class is generated by a single gate (\cite{AGS15}, Corollary 5). 
  In this section we show that this fails when $|A|>2$. Actually, we prove that there are gate classes that are not generated by a finite set of gates.

  \begin{thm}\label{NonFinGenClass}
  If $|A|\ge 3$, there exists a gate class that is not finitely generated.
  \end{thm}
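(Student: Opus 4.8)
The plan is to reduce the statement to the construction of an infinite, strictly increasing chain of gate classes. Suppose we can find reversible gates $F_1, F_2, \ldots$ such that $\la F_1, \ldots, F_n \ra \ssne \la F_1, \ldots, F_{n+1}\ra$ for every $n \ge 1$, and set $S = \la \{F_n : n \ge 1\}\ra$. Because all four generation rules in Definition \ref{GenRules} are finitary (each application consumes finitely many input gates), the ascending union $\bup_n \la F_1, \ldots, F_n\ra$ is already closed under the rules, so it equals $S$. If $S$ were generated by a finite set $\{G_1, \ldots, G_r\}$, then each $G_j$ would lie in some $\la F_1, \ldots, F_{n_j}\ra$, hence all of them in $\la F_1, \ldots, F_N \ra$ for $N = \max_j n_j$; this would force $S = \la F_1, \ldots, F_N\ra$, contradicting strictness of the chain. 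Thus it suffices to exhibit such a chain, and the whole problem becomes: certify, for each $n$, that $F_{n+1} \notin \la F_1, \ldots, F_n\ra$.

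The standard way to certify $F_{n+1} \notin \la F_1, \ldots, F_n \ra$ is to produce, for each $n$, a gate-class invariant $\phi_n$ — a property of gates that is closed under all four rules — which holds for $F_1, \ldots, F_n$ but fails for $F_{n+1}$; then every gate in $\la F_1, \ldots, F_n\ra$ inherits $\phi_n$, separating it from $F_{n+1}$. The cleanest invariants of this kind are conservation laws: fixing an abelian-group structure on a subset of $A$ and a linear functional on symbol-counts, the property ``the functional is preserved modulo $m$'' is closed under the permutation, tensor, and composition rules, and — crucially — under the ancilla rule, since a fixed ancilla prefix contributes only a constant to every count. I would try to design the $\phi_n$ so that each becomes active only at a growing arity, and to choose $F_n$ of correspondingly growing arity that violates $\phi_n$ while respecting $\phi_m$ for all $m \ne n$.

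The main obstacle is the ancilla rule together with the demand that the chain be strictly increasing at every level, and here the non-binary hypothesis $|A|\ge 3$ is essential. Ancillas are powerful: their initial values are chosen by us and need only be restored at the end, which destroys the most tempting obstructions. The sign of the induced permutation on $A^k$ is not ancilla-stable, because the ancilla rule isolates the action on a single invariant sub-block and leaves the sign on the complementary blocks unconstrained; ``essential arity'' is not stable either, since with ancillas bounded-arity gates can already generate unbounded-arity ones — this is precisely why every class is finitely (indeed singly) generated when $|A| = 2$. Moreover, purely linear conservation laws cannot by themselves yield an infinite ascending chain, since the allowed count-changes of a class form a subgroup of the rank-$(|A|-1)$ root lattice, and that lattice is Noetherian. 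Consequently the separating invariants must be genuinely non-additive yet ancilla-stable combinatorial features of the gate's action that grow without bound as $n \to \infty$; constructing such features, and verifying their closure under the ancilla rule, is the heart of the argument.

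A complementary viewpoint, which I would use to guide the construction, is Je{\v r}{\'a}bek's reversible clone--coclone duality \cite{Jer14}: a gate class corresponds to the set of relations it preserves, and under this Galois connection finite generation of the gate class matches finite axiomatizability of the dual relational coclone. The task then becomes the exhibition of a coclone over an alphabet of size $\ge 3$ that is not finitely related — an intrinsically non-binary phenomenon, consistent with the fact that this result has no analogue when $|A| = 2$. I expect the explicit gates $F_n$ to be read off from an unbounded-arity family of relations that is provably not finitely generated.
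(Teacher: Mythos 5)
Your first paragraph is a correct and clean reduction: an infinite strictly increasing chain $\la F_1,\ldots,F_n\ra \ssne \la F_1,\ldots,F_{n+1}\ra$ does yield a non-finitely-generated class, by exactly the finitary-rules argument you give. But everything after that is a plan rather than a proof: you never exhibit the gates $F_n$, never define a single concrete invariant $\phi_n$, and you yourself flag the construction of ``genuinely non-additive yet ancilla-stable combinatorial features'' as ``the heart of the argument'' --- which is precisely the part left undone. A referee cannot check a proof whose central object is described only by the properties it ought to have; as it stands, the proposal establishes nothing beyond the (easy) reduction.

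For comparison, the paper supplies exactly the missing content, and in a way that sidesteps your stated framework. It takes $T_n=\tau_{1^n2,1^n3}$ (so $T_{n+1}$ generates $T_n$ by the ancilla rule, giving the chain for free) and reduces to showing $T_n$ does not generate $T_{n+1}$. Crucially, the paper does \emph{not} produce an invariant closed under all four rules, including ancillas; instead it handles the ancilla rule by direct case analysis --- using $|A|\ge 3$ to show any ancilla may be assumed initially $1$ and used only in control positions, so one may assume no ancillas at all --- and only then applies a separating argument: over $2^n$ specially chosen inputs (coordinates in $\{1,2\}$, last coordinate $b=2$), every circuit built from $T_0,\ldots,T_n$ flips $b$ on an even number of these inputs, whereas $T_{n+1}$ flips it on exactly one. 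This parity count is indeed non-linear, consistent with your (correct) observation that linear conservation laws cannot work, but it is not ancilla-stable; the lesson is that insisting on invariants closed under the ancilla rule, as your plan does, may be the wrong constraint, and the duality of Je{\v r}{\'a}bek offers no shortcut here since translating ``not finitely generated'' across that Galois connection is itself nontrivial. To complete your argument you would need, at minimum, an explicit family of gates and a verified separating criterion of the above kind.
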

  
  \begin{proof}
  Assume without loss of generality that $A=\{1,\ldots,|A|\}$. 
  For $k\ge 1$, define $T_k=\tau_{1^k2,1^k3}$.
  Clearly $T_{k+1}$ generates $T_k$ by the ancilla rule. 
  Let $T=\la T_k:k\ge 0\ra$. We claim that $T$ is not finitely generated. 
  
  Assume for the sake of contradiction that $T$ is finitely generated. Then $T$ is generated by some finite set $I\sse T$. 
  Let $F\in I$ be a gate. Then $F\in I\sse T=\la T_k:k\ge 0\ra$. So $F$ is generated by a finite subset of gates in the form $T_k$. $T_{k+1}$ generates $T_k$, so $F$ is generated by a single gate $T_n$ for some $n$. 
  Because $I$ is finite, we can take $n$ to be large enough, so that all gates in $I$ are generated by $T_n$. 
  We know that $T$ is generated by $I$. So $T$ is generated by $T_n$. 
  
  In particular, this means that $T_n$ generates $T_{n+1}$. We prove that this cannot be the case.
  We consider how ancilla symbols can be used when using $T_n$ to generate $T_{n+1}$. 
  \begin{enumerate}
  \item If an ancilla symbol is initially $1$, then it is $1$ all the time. Any $T_n$ gate whose last input is this ancilla symbol has not effect. 
  So we can assume this ancilla symbol only acts as one of the first $n$ inputs of $T_n$. 
  \item If an ancilla symbol is initially $2$ or $3$, then it is $2$ or $3$ all the time. 
  If it acts as one of the first $n$ inputs of some $T_n$, then that $T_n$ gate has no effect. So we can assume this ancilla symbol only acts as one of the last input of $T_n$. 
  On the other hand, if a $T_n$ gate has last symbol the ancilla symbol, then only that ancilla symbol is affected. This means we do not need this ancilla symbol at all.  
  \item If an ancilla symbol is initially larger than $3$, then any $T_n$ gate acting on this symbol does not have any effect. So we do not need this ancilla symbol. 
  \end{enumerate}
  
  By the above discussion, we can assume that all ancilla symbol are initially $1$'s, and the only use of them is to act as one of the first $n$ inputs of $T_n$. 
  Therefore we only need to show that, using $T_0,\ldots,T_n$ and no ancilla symbols, it is impossible to generate $T_{n+1}$. 
  
  Let $a_1,\ldots,a_n,b$ be the inputs of $T_{n+1}$.
  Consider all $T_i$ ($0\le i\le n$) gates whose last input is not $b$. We can remove these $T_i$ actions, because they do not change the effect of all other $T_i$ actions. 
  So we can assume that the last input of each $T_i$ is $b$. 
  In other words, every $T_i$ acts by choosing a subset of size $i$ among $a_1,\ldots,a_n$ as the first $i$-inputs, and $b$ as the last input. 
  
  Consider $2^n$ different inputs $(a_1,\ldots,a_n,b)$, where the $k$-th input $0\le k<2^n$ has 
  \begin{enumerate}
  \item $a_i=1$ if the $i$-th lowest bit in the binary representation of $k$ is $1$;
  \item $a_i=2$ otherwise;
  \item $b=2$.
  \end{enumerate}
  $T_{n+1}$ changes the value of $b$ for exactly one of the inputs, namely the input $2^n-1$. 
  So we only need to prove that for any gate $A^{n+1}\to A^{n+1}$ built using $T_0,\ldots,T_n$ and no ancilla symbols, there are always an even number of inputs among the $2^k$ ones defined above, on which the value of $b$ is changed.
  
  Each gate $T_i$ ($0\le i\le n$) changes the value of $b$ for an even number of inputs. 
  The symmetric difference of several sets of even size must have even size. 
  So the overall effect is that for an even number of inputs, the value $b$ is changed. 
  \end{proof}
  \begin{rmk}\label{UncountablyMany}
  A natural question to ask is whether there are uncountably many gate classes over a non-binary alphabet. 
  We expect this to be true because this is true for classical gate classes over non-binary alphabets \cite{IM59}. 
  
  A possible approach to proving this is to find a countable collection of gates $F_1,F_2,\ldots$ such that $F_i\not \in \la F_j:j\ne i\ra$ for all $i\ge 1$. If we have such a collection, then for different subsets $I\sse \{1,2,\cdots\}$, the gate classes $F_I=\la F_i:i\in I\ra$ are different. 
  However, we have not been able to find such a collection.
  \end{rmk}
  
  \section{Some generation results}
  In this section, let $A=\{1,\ldots,k\}$ with $k\ge 3$. 
  We prove several generation results which can be seen as generalizations of generation results over the binary alphabet. Nevertheless there are some subtle differences between the binary alphabet and non-binary alphabets. 
  
  Recall Notation \ref{GateDefn} where we defined the reversible gates $\tau_{u,v}$, $\SWAP_{u,v}$, and $w\hyp F$. 
  \subsection{Generating $\ALL$}
  Recall that $\ALL$ is the class of all gates. 
  \begin{thm}\label{AllGenerate}
  Let $S$ be the following set of gates:
  \begin{enumerate}
  \item $\tau_{a,b}$ for all $a\ne b\in A$;
  \item $\tau_{11,12}$.
  \end{enumerate}
  Then $\la S\ra=\ALL$.
  \end{thm}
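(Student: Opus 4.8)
The plan is to show that $\la S\ra$ contains every transposition of equal-length strings, from which $\la S\ra=\ALL$ follows. Since $\ALL=\bigcup_k \mathrm{Sym}(A^k)$, transpositions generate each symmetric group, and the composition rule closes $\la S\ra$ under composition within $A^k\to A^k$, it suffices to produce enough transpositions. In fact I would only produce the \emph{Hamming-distance-one} transpositions $c\hyp\tau_{a,b}$, which swap the target pair $a\ne b$ precisely when the control block equals $c$: the graph on $A^k$ joining strings that differ in a single coordinate is connected, and the transpositions along the edges of a connected graph already generate the full symmetric group on its vertices. After using the permutation rule to move the controlled coordinate anywhere, it is enough to build $c\hyp\tau_{a,b}$ for every control string $c$ (of every length) and every $a\ne b$. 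Note that the given gate is exactly the one-symbol-controlled gate $\tau_{11,12}=1\hyp\tau_{1,2}$, and that the gates in (1) generate all of $\mathrm{Sym}(A)$ acting on a single symbol.

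For the base case $|c|=1$ I would conjugate $1\hyp\tau_{1,2}$ by single-symbol permutations. If $\sigma,\pi\in\mathrm{Sym}(A)$ act on the control and target symbols respectively, a direct computation gives $(\sigma\ot\pi)\circ(1\hyp\tau_{1,2})\circ(\sigma\ot\pi)^{-1}=\sigma(1)\hyp\tau_{\pi(1),\pi(2)}$; choosing $\sigma(1)=c$ and $\{\pi(1),\pi(2)\}=\{a,b\}$ yields an arbitrary single-control gate $c\hyp\tau_{a,b}$. All conjugating gates lie in $\la S\ra$ by (1), the tensor product rule, and closure under inverses (every element of a finite symmetric group is a power of itself).

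The crucial step, and the only place the hypothesis $k\ge 3$ is used, is building a two-control gate $(c_1,c_2)\hyp\tau_{a,b}$ from one-control gates using a single ancilla $z$ and three distinct symbol values $1,2,3$. Initialize $z=1$ and apply, in order, $c_1\hyp\tau_{1,2}$ on $(x_1,z)$ and then $c_2\hyp\tau_{2,3}$ on $(x_2,z)$; a direct check shows that afterwards $z=3$ if and only if $x_1=c_1$ and $x_2=c_2$. Now apply the one-control gate $3\hyp\tau_{a,b}$ on $(z,y)$, which swaps $a\leftrightarrow b$ on the target exactly in that case, and then reapply the first two gates (each is an involution) to restore $z=1$, so that the ancilla rule removes $z$. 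The net effect is precisely $(c_1,c_2)\hyp\tau_{a,b}$. Finally I would induct on the control length: given all $(m-1)$-control gates, compute $(c_1\cdots c_{m-1})\hyp\tau_{1,3}$ into an ancilla $z$ (initialized to $1$), apply the two-control gate $(3,c_m)\hyp\tau_{a,b}$ on $(z,x_m,y)$, and then uncompute $z$; this produces $(c_1\cdots c_m)\hyp\tau_{a,b}$ using an ancilla that takes only two values. This yields all Hamming-distance-one transpositions and hence $\la S\ra=\ALL$.

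The main obstacle is the two-control construction. A single-control transposition couples its control to the \emph{current value} of its target, and the whole point is to exploit this together with a third symbol to route a conjunction through the ancilla via the cascade $1\to 2\to 3$. This is exactly what fails when $|A|=2$, and it is where the argument genuinely requires a non-binary alphabet; everything else (the base-case conjugation, the ancilla-based induction on control length, and the connectivity of the Hamming graph) is routine.
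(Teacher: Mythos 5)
You take essentially the same route as the paper: the reduction to Hamming-distance-one transpositions via connectivity of the Hamming graph, the conjugation argument for single-control gates (Steps 1--3 of Lemma \ref{TauABCABD}), the $1\to 2\to 3$ cascade through one ancilla for two controls (identical in substance to Step 4 of Lemma \ref{TauABCABD}), and your induction on control length, which is a recursive phrasing of the ancilla ladder in Lemma \ref{WControlTrans}.

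One step as written would fail, though it is trivially repaired. In the two-control construction you ``reapply the first two gates (each is an involution) to restore $z=1$.'' That justification is not valid: the two gates share the ancilla wire and do not commute, so their composition is not undone by repeating it in the same order. Concretely, when $x_1=c_1$ and $x_2=c_2$ the ancilla sits at $z=3$ after the target swap; reapplying $c_1\hyp\tau_{1,2}$ leaves $z=3$ (since $3\notin\{1,2\}$), and then $c_2\hyp\tau_{2,3}$ sends $z$ to $2$, so the ancilla ends at $2\ne 1$ and the ancilla rule cannot be invoked. You must uncompute in \emph{reverse} order ($c_2\hyp\tau_{2,3}$ first, then $c_1\hyp\tau_{1,2}$), exactly as the paper does (its steps 4 and 5 are steps 2 and 1 reversed). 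Your inductive step is unaffected because there the uncomputation consists of a single involution, but the reverse-order convention should be stated wherever a multi-gate computation is undone. With this correction your argument is complete.
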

  \begin{rmk}
  Over the binary alphabet $\{0,1\}$, $\ALL$ is generated by the Toffoli gate $11\hyp\tau_{0,1}$, but not by the CNOT gate $\tau_{10,11}$. Therefore Theorem \ref{AllGenerate} does not hold over the binary alphabet. 
  \end{rmk}
  We prove the theorem in several steps. 
  \begin{lemma}\label{TauABCABD}
  $S$ generates $\tau_{abc,abd}$ for all $a,b,c,d\in A$. 
  \end{lemma}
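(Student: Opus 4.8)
The plan is to first reduce the general statement to the single canonical gate $\tau_{111,112}$ by a conjugation argument, and then to construct that canonical gate from $\tau_{11,12}$ using one ancilla symbol whose extra available value records the conjunction of two controls. The third value is exactly the resource that the binary alphabet lacks, which is why the lemma (and Theorem \ref{AllGenerate}) fails over $\{0,1\}$ but holds here since $k\ge 3$.

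First I would record two closure facts. Applying a single-symbol gate to a chosen coordinate is allowed: tensoring with identity gates (which are the permutation gates $P_{\id}$) and using the permutation rule, any $\tau_{a,b}\in S$ can be made to act on any one coordinate of a longer string, so every permutation $\pi$ of $A$ (a product of transpositions) is available coordinatewise. Consequently, for any coordinatewise permutation $P=\pi_1\ot\cdots\ot\pi_m$ we have $P\in\la S\ra$ and $P^{-1}=\pi_1^{-1}\ot\cdots\ot\pi_m^{-1}\in\la S\ra$, and a direct check gives $P\circ\tau_{u,v}\circ P^{-1}=\tau_{P(u),P(v)}$. In particular, if $c\ne d$, choose $\pi_1,\pi_2,\pi_3$ with $\pi_1(1)=a$, $\pi_2(1)=b$, $\pi_3(1)=c$, $\pi_3(2)=d$; then conjugating $\tau_{111,112}$ by $\pi_1\ot\pi_2\ot\pi_3$ yields $\tau_{abc,abd}$. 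Since $\tau_{abc,abd}$ is the identity when $c=d$, it suffices to generate $\tau_{111,112}$. The same conjugation applied to $\tau_{11,12}=1\hyp\tau_{1,2}$ shows that every singly controlled transposition $p\hyp\tau_{a,b}=\tau_{pa,pb}$ lies in $\la S\ra$ (take $\pi_1(1)=p$, $\pi_2(1)=a$, $\pi_2(2)=b$); these are the only building blocks I will use.

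The core step, and the main obstacle, is to assemble $\tau_{111,112}$ from singly controlled transpositions by a compute/apply/uncompute sequence using one ancilla symbol $e$ initialized to $1$. On inputs $(x_1,x_2,x_3,e)$ I would apply, in order: $1\hyp\tau_{1,2}$ with control $x_1$ and target $e$; then $1\hyp\tau_{2,3}$ with control $x_2$ and target $e$; then $3\hyp\tau_{1,2}$ with control $e$ and target $x_3$; and finally the inverses of the first two gates. The first two gates drive $e$ to the value $3$ precisely when $x_1=x_2=1$, so the middle gate swaps $x_3$ between $1$ and $2$ exactly in that case, while the last two gates restore $e$. The remaining verification is routine: checking the four cases for $(x_1,x_2)$ confirms both that $e$ returns to its initial value $1$ (so the ancilla rule extracts a genuine $3$-symbol gate) and that the net action on $(x_1,x_2,x_3)$ is exactly $\tau_{111,112}$. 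Each factor is a singly controlled transposition placed on the appropriate pair of coordinates, hence in $\la S\ra$, so the composite and its ancilla-reduction are as well. The conceptual difficulty is entirely in finding this five-gate scheme; once the scheme is in hand, combining it with the conjugation reduction of the previous paragraph completes the proof.
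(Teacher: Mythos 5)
Your proof is correct and takes essentially the same approach as the paper: conjugation by the single-symbol transpositions to normalize values, together with the same palindromic five-gate ancilla gadget in which the ancilla is driven through a third value exactly when both controls hold (the paper runs this gadget directly for general $a,b,c,d$ in its Step 4 rather than conjugating the canonical $\tau_{111,112}$, which is a cosmetic difference). One caution: the uncomputation must apply the two computing gates in \emph{reverse} order (as in a true compute/apply/uncompute palindrome); applying them in the original order leaves the ancilla at $2$ rather than $1$ in the case $x_1=x_2=1$.
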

  \begin{proof}
  \textbf{Step 1.} $S$ generates $\tau_{11,1c}$ for all $c\in A$. 
  Assume we have two input symbols $x$, $y$. The following sequence of operations implements $\tau_{11,1c}$.
  \begin{enumerate}
  \item Apply $\tau_{2,c}$ on $y$. 
  \item Apply $\tau_{11,12}$ on $xy$. 
  \item Apply $\tau_{2,c}$ on $y$.
  \end{enumerate}
  
  \textbf{Step 2.} $S$ generates $\tau_{1b,1c}$ for all $b,c\in A$. This is similar to step 1 and omitted. 
  
  \textbf{Step 3.} $S$ generates $\tau_{ab,ac}$ for all $a,b,c\in A$. Assume we have two input symbols $x$, $y$.
  The following sequence of operations implements $\tau_{ab,ac}$.
  \begin{enumerate}
  \item Apply $\tau_{1,a}$ on $x$. 
  \item Apply $\tau_{1b,1c}$ on $xy$. 
  \item Apply $\tau_{1,a}$ on $x$.
  \end{enumerate}
  
  \textbf{Step 4.} $S$ generates $\tau_{abc,abd}$. Assume we have three input symbols $x$, $y$, $z$. 
  We introduce an ancilla symbol $w$ which is initially $1$. 
  The following sequence of operations implements $\tau_{abc,abd}$ and fixes all ancilla symbols.
  \begin{enumerate}
  \item Apply $\tau_{a1,a2}$ on $xw$. 
  \item Apply $\tau_{b2,b3}$ on $yw$. 
  \item Apply $\tau_{3c,3d}$ on $wz$. 
  \item Apply $\tau_{b2,b3}$ on $yw$. 
  \item Apply $\tau_{a1,a2}$ on $xw$.
  \end{enumerate}
  \end{proof}
  
  \begin{lemma}\label{WControlTrans}
  $S$ generates $w\hyp \tau_{a,b}$ for all strings $w$, and $a,b\in A$.
  \end{lemma}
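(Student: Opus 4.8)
The plan is to induct on the length $m=|w|$ of the control string, using Lemma \ref{TauABCABD} both for the base cases and for the one nontrivial step of the induction. Recall that $w\hyp\tau_{a,b}$ is the gate on $A^{m+1}$ that applies $\tau_{a,b}$ to its last symbol precisely when the first $m$ symbols spell out $w$, and fixes every other input. For $m=0$ the gate is simply $\tau_{a,b}\in S$. For $m=2$ the gate is $\tau_{w_1w_2a,\,w_1w_2b}$, which is of the form $\tau_{abc,abd}$ and hence generated by Lemma \ref{TauABCABD}; for $m=1$ the gate $\tau_{w_1a,\,w_1b}$ is obtained from the three-symbol gate $\tau_{1w_1a,\,1w_1b}$ (again of the form $\tau_{abc,abd}$) by holding a single ancilla at $1$ and invoking the ancilla rule.

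For the inductive step I would take $m\ge 3$ and assume $S$ generates $w'\hyp\tau_{a,b}$ for every string $w'$ of length less than $m$. Write $w=w_1w_2w_3\cdots w_m$, let $x_1,\ldots,x_m$ be the control inputs and $z$ the target, and introduce one ancilla symbol $t$ initialized to $1$. First I would apply $\tau_{w_1w_2 1,\,w_1w_2 2}$ to $(x_1,x_2,t)$; since $t$ starts at $1$, this sets $t=2$ exactly when $x_1=w_1$ and $x_2=w_2$ and leaves $t=1$ otherwise, and it is available by Lemma \ref{TauABCABD}. Next I would apply $(2w_3\cdots w_m)\hyp\tau_{a,b}$ to $(t,x_3,\ldots,x_m,z)$: its control string has length $1+(m-2)=m-1<m$, so it is generated by the inductive hypothesis, and it applies $\tau_{a,b}$ to $z$ exactly when $t=2$ and $x_i=w_i$ for $3\le i\le m$, i.e.\ exactly when $x_i=w_i$ for all $i$. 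Finally I would reapply $\tau_{w_1w_2 1,\,w_1w_2 2}$ to $(x_1,x_2,t)$ to reset $t$ to $1$.

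The net effect of this three-gate sequence on $(x_1,\ldots,x_m,z)$ is $\tau_{a,b}$ on $z$ controlled on $x=w$, with the ancilla returned to its initial value, so the ancilla rule yields $w\hyp\tau_{a,b}$ and closes the induction. The only point needing care—and the only place an alphabet-size issue could conceivably enter—is that the flag $t$ never leaves $\{1,2\}$: because $t$ starts at $1$ and the merge gate $\tau_{w_1w_2 1,\,w_1w_2 2}$ only ever interchanges $t=1$ and $t=2$ (and only when $(x_1,x_2)=(w_1,w_2)$), this holds automatically, so the construction works for every $k\ge 3$ and uses just the two symbol values $1,2$ in each ancilla. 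I do not anticipate a serious obstacle beyond this bookkeeping; the essential idea is simply to compress the first two control symbols into a single ancilla flag so that the control length strictly decreases at each stage.
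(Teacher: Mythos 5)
Your proof is correct and is essentially the paper's construction in recursive form: unrolling your induction reproduces the paper's explicit ladder of ancilla flags, where each $\tau_{\cdot\cdot 1,\cdot\cdot 2}$ gate from Lemma \ref{TauABCABD} records whether a prefix of the controls matches, the controlled transposition is applied through the final flag, and the flags are then uncomputed. The only differences are bookkeeping: you terminate the ladder with $\tau_{2w_m a,2w_m b}$ rather than a last flag followed by $\tau_{2a,2b}$, and you treat the $|w|\le 2$ base cases slightly more carefully (using the ancilla rule for $|w|=1$) than the paper does.
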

  \begin{proof}
  If $|w|\le 2$, then $S$ generates $w$ by Lemma \ref{TauABCABD}. 
  In the following, assume $|w|\ge 3$.
  
  Let $|w|=n$. Assume we have $n+1$ inputs $x_1,\ldots,x_n, y$. Introduce $n-1$ ancilla symbols $z_1,\ldots, z_{n-1}$. Initially all $z_i=1$. 
  The following sequence of operations implements $w\hyp\tau_{a,b}$ and fixes all ancilla symbols.
  \begin{enumerate}
  \item Apply $\tau_{w_1w_21,w_1w_22}$ on $x_1x_2z_1$. 
  \item For $i=3,\ldots,n$, apply $\tau_{w_i21,w_i22}$ on $x_iz_{i-2}z_{i-1}$. 
  \item Apply $\tau_{2a,2b}$ on $z_{n-1}y$. 
  \item For $i=n,\ldots,3$, apply $\tau_{w_i21,w_i22}$ on $x_iz_{i-2}z_{i-1}$. 
  \item Apply $\tau_{w_1w_21,w_1w_22}$ on $x_1x_2z_1$. 
  \end{enumerate}
  \end{proof}
  
  \begin{proof}[Proof of Theorem \ref{AllGenerate}]
  We would like to show that for all $l$, all bijections $A^l\to A^l$ are in $\la S\ra$.
  The group of bijections $A^l\to A^l$ is the symmetric group $S_{|A|^l}$, so it is generated by transpositions.
  The composition in the symmetric group is the same as the one in composition rule of reversible gates. So we only need to prove that every transposition is in $\la S\ra$. 
  Moreover, we do not need all transpositions. We only need a set of transpositions $T$ so that the graph whose vertices are $A^k$ and edges are $\{(u,v):\tau_{u,v}\in T\}$ is connected.
  We can take $T$ to be the set of $\tau_{u,v}$ where $u$ and $v$ differ by one position. 
  
  We can assume without loss of generality that $u$ and $v$ differ in the last position.
  Then $\tau_{u,v}$ is in the form $w\hyp\tau_{a,b}$. The theorem follows from Lemma \ref{WControlTrans}. 
  \end{proof}
  
  \subsection{Generating $\CONS$}
  Recall that $\CONS$ is the set of conservative gates. 
  \begin{thm}\label{ConsGen}
  Let $S$ be the set of $1\hyp \SWAP_{a,b}$ for all $a\ne b\in A$. 
  Then $\la S\ra=\CONS$.
  \end{thm}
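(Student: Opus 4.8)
The plan is to prove the two inclusions separately. The inclusion $\la S\ra\sse\CONS$ is immediate: each generator $1\hyp\SWAP_{a,b}$ merely permutes the positions of symbols within a string and so preserves the content (the multiset of symbols occurring), hence lies in $\CONS$; and $\CONS$ is a gate class. The substance is the reverse inclusion $\CONS\sse\la S\ra$. Here I would argue exactly as in the proof of Theorem \ref{AllGenerate}. A conservative gate $F\colon A^l\to A^l$ preserves content, so it restricts to a permutation of each content class $\cC$, the set of strings with a fixed multiset of symbols, which is a single orbit of $S_l$ acting by permuting positions. Since gates supported on distinct content classes commute, it suffices to generate, for each $\cC$, every transposition $\tau_{u,v}$ with $u,v\in\cC$; and because the transpositions with $u,v$ differing by a swap of two adjacent positions form a connected graph on $\cC$ (nontrivial adjacent transpositions still connect the orbit, by bubble sort), it is enough to generate those adjacent-swap transpositions. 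After reordering coordinates by the permutation rule, such a transposition is exactly $w\hyp\SWAP_{a,b}$, where $w$ is the common value of $u$ and $v$ on the $l-2$ untouched coordinates and $a\ne b$ are the swapped symbols. So the whole theorem reduces to showing that $S$ generates $w\hyp\SWAP_{a,b}$ for every string $w$ and every $a\ne b$.

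This is the conservative analogue of Lemma \ref{WControlTrans}, but the proof there is unavailable: it builds up multiple controls by conditionally flipping an ancilla from $1$ to $2$, and it relocates controls using $\tau_{1,c}$, both of which change symbol counts and so are not conservative. Overcoming this is the main obstacle, and I would split it into two gadgets. The first is a single-control gadget producing $c\hyp\SWAP_{a,b}$ for every symbol $c$, not just $c=1$. (That $S$ generates the uncontrolled $\SWAP_{a,b}$ is clear: fix the control of $1\hyp\SWAP_{a,b}$ to $1$ via the ancilla rule.) The idea is to conjugate the available control-on-$1$ swap $1\hyp\SWAP_{a,b}$ by $U:=\SWAP_{c,1}$ acting on the control coordinate and a fresh ancilla $z$ initialized to $1$. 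For $c\ne 1$, the map $U$ converts the condition ``control $=c$'' into ``control $=1$'', at the cost of also firing when the control already equals $1$, so that $U\circ(1\hyp\SWAP_{a,b})\circ U$ swaps the targets precisely when the control lies in $\{1,c\}$; since the ``control $=1$'' and ``control $=c$'' events are disjoint, this composite equals $(1\hyp\SWAP_{a,b})\circ(c\hyp\SWAP_{a,b})$. As $1\hyp\SWAP_{a,b}$ is an involution already in $\la S\ra$, composing once more with it extracts $c\hyp\SWAP_{a,b}$.

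The second gadget assembles $w\hyp\SWAP_{a,b}$ for $w=c_1\cdots c_n$ from these single-control swaps, replacing the non-conservative ``AND into an ancilla'' by conservative token passing. I would use a chain of ancillas $z_0,\ldots,z_n$ with $z_0=1$ (the token) and $z_1=\cdots=z_n=2$, and for $i=1,\ldots,n$ apply $c_i\hyp\SWAP_{1,2}$ to the triple $(p_i,z_{i-1},z_i)$, which advances the token from $z_{i-1}$ to $z_i$ exactly when the $i$-th control matches $c_i$, and does nothing once the token has stalled. Hence $z_n=1$ after the chain if and only if all controls match; one then applies $1\hyp\SWAP_{a,b}$ to $(z_n,t_1,t_2)$ and runs the chain in reverse to restore the ancillas. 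Every gate used is a single-control swap, hence in $\la S\ra$ by the first gadget, and the ancillas return to their initial values, so $w\hyp\SWAP_{a,b}\in\la S\ra$, completing the reduction.

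I expect the single-control gadget — obtaining controls on symbols other than $1$ without the forbidden move $\tau_{1,c}$ — to be the crux, since it is precisely where conservativity makes the non-binary problem genuinely harder than the $\ALL$ case; once controls on arbitrary symbols are available, the token-passing chain for multiple controls is routine. A point worth checking carefully is that a single $c\hyp\SWAP_{a,b}$ is itself only a single transposition of $A^{n+2}$ (it fixes every string outside $\{w\,ab,\,w\,ba\}$), which is what licenses the connected-graph generation argument of the first paragraph.
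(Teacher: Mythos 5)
Your proposal is correct and matches the paper's proof essentially step for step: the same reduction of a conservative gate to transpositions $\tau_{u,v}$ within a content class and hence (after reordering positions) to gates $w\hyp\SWAP_{a,b}$, the same single-control gadget obtaining $c\hyp\SWAP_{a,b}$ by conjugating $1\hyp\SWAP_{a,b}$ with swaps against ancillas, and the same token-passing chain of $w_i\hyp\SWAP_{1,2}$ gates (the paper's Lemma \ref{WControlSWAP}) for multiple controls. The only deviation is cosmetic: in the single-control step the paper uses a second ancilla initialized to some $d\ne 1,c$ to park the control value $1$ so the conjugation yields $c\hyp\SWAP_{a,b}$ outright, whereas you use one ancilla and remove the spurious firing on control $=1$ by composing once more with $1\hyp\SWAP_{a,b}$; both are valid.
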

  \begin{rmk}
  Over the binary alphabet $\{0,1\}$, $\CONS$ is generated by the Fredkin gate $1\hyp\SWAP_{0,1}$. So the statement of Theorem \ref{ConsGen} is true over the binary alphabet. However, the proof presented here is for non-binary alphabets. 
  \end{rmk}
  We prove the theorem in several steps. 
  \begin{lemma}
  $S$ generates $c\hyp \SWAP_{a,b}$ for all $a,b,c\in A$. 
  \end{lemma}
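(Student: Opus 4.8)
The plan is to realize $c\hyp\SWAP_{a,b}$ by conjugating the generator $1\hyp\SWAP_{a,b}$ so as to move the control value from $1$ to $c$. The obvious way to do this—conjugating by the single-symbol transposition exchanging the values $1$ and $c$ on the control wire, exactly as in Step 3 of Lemma \ref{TauABCABD}—is unavailable here, because that transposition is not conservative and hence does not lie in $\la S\ra\sse\CONS$. First I would dispose of the trivial cases: if $a=b$ then $\SWAP_{a,b}=\id$, and if $c=1$ then $c\hyp\SWAP_{a,b}\in S$; so I may assume $a\ne b$ and $c\ne 1$.

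Next I would record that $\la S\ra$ contains the two-symbol swap $\SWAP_{1,c}$: applying $1\hyp\SWAP_{1,c}$ with its control wire fixed to an ancilla equal to $1$ and invoking the ancilla rule yields $\SWAP_{1,c}$ on two symbols. The core of the argument is then a conjugation that uses a single ancilla to stay conservative. Writing the three inputs as $(x,y,z)$ with $x$ the intended control, I would introduce an ancilla $w=1$ and apply, in order, $\SWAP_{1,c}$ on $(x,w)$, then $1\hyp\SWAP_{a,b}$ on $(x,y,z)$, then $\SWAP_{1,c}$ on $(x,w)$ again. Tracing the three relevant cases ($x=c$, $x=1$, $x\notin\{1,c\}$) shows that $w$ returns to $1$ in every case, so the ancilla rule applies, and that the net effect on $(x,y,z)$ is to apply $\SWAP_{a,b}$ to $(y,z)$ exactly when $x\in\{1,c\}$.

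The only defect of this conjugate is that it also fires on $x=1$, which the target gate must not do. The key observation that resolves this is that the generator $1\hyp\SWAP_{a,b}$ fires on exactly the case $x=1$ and that $\SWAP_{a,b}$ is an involution; hence composing the conjugate with $1\hyp\SWAP_{a,b}$ cancels the unwanted $x=1$ firing while leaving the $x=c$ firing intact. Since neither gate alters $x$ and both apply the same swap on $(y,z)$, they commute, and the composite is precisely $c\hyp\SWAP_{a,b}$. I expect the main obstacle to be purely one of verification: carefully checking that the ancilla is restored in all cases and that the two composed gates interact only on $x=1$, so that the $\{1,c\}$-controlled swap collapses to the desired $c$-controlled one.
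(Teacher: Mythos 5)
Your proof is correct, but it removes the spurious firing in a different way than the paper does. Both arguments conjugate the generator $1\hyp\SWAP_{a,b}$ by conservative value-swaps to move the control value from $1$ to $c$, but they diverge on how to prevent the gate from firing when $x=1$. The paper uses \emph{two} ancillas: one initialized to $1$, used with $\SWAP_{1,c}$ to bring the case $x=c$ to the control value $1$ (as you do), and a second one initialized to a third symbol $d\ne 1,c$, used with $\SWAP_{1,d}$ to park the case $x=1$ at the value $d$ beforehand, so that the conjugation alone yields exactly $c\hyp\SWAP_{a,b}$. Your proof instead accepts the $\{1,c\}$-controlled swap produced by the one-ancilla conjugation and then cancels the unwanted $x=1$ firing by composing with $1\hyp\SWAP_{a,b}$ itself, exploiting that $\SWAP_{a,b}$ is an involution and that neither gate moves the control wire. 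The trade-off: the paper's construction needs the existence of a third letter $d$ (so it uses $|A|\ge 3$, which holds in that section), while your involution-cancellation trick uses only one ancilla, no third symbol, and would work verbatim over any alphabet size; the paper's version, on the other hand, produces the target gate in a single conjugation with no correction step. Your argument is complete: the ancilla restoration in all three cases and the case analysis of the composite are exactly the verifications needed, and your observation that $\SWAP_{1,c}\in\la S\ra$ via the ancilla rule (which the paper uses silently) is a point in your favor.
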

  \begin{proof}
  If $c=1$, then the lemma follows from definition of $S$. 
  In the following, assume $c\ne 1$. 
  
  Assume we have three input symbols $x$, $y$, $z$. Introduce an ancilla symbol $w$ which is initially $1$. Introduce an ancilla symbol $u$ which is initially $d$ where $d\ne 1,c$.
  The following sequence of operations implements $c\hyp \SWAP_{a,b}$ and fixes all ancilla symbols. 
  \begin{enumerate}
  \item Apply $\SWAP_{1,d}$ on $ux$. 
  \item Apply $\SWAP_{1,c}$ on $wx$. 
  \item Apply $1\hyp \SWAP_{a,b}$ on $xyz$. 
  \item Apply $\SWAP_{1,c}$ on $wx$. 
  \item Apply $\SWAP_{1,d}$ on $ux$. 
  \end{enumerate}
  \end{proof}
  
  \begin{lemma}\label{WControlSWAP}
  $S$ generates $w\hyp \SWAP_{a,b}$ for all strings $w$ and $a,b\in A$. 
  \end{lemma}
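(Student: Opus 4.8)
The plan is to imitate the cascade construction of Lemma \ref{WControlTrans}, but to carry it out entirely with conservative gates so that it stays inside $\la S\ra$. Write $n=|w|$. First I would dispose of the short cases: if $a=b$ the gate is the identity, if $n=0$ then $w\hyp\SWAP_{a,b}=\SWAP_{a,b}$ is obtained from $1\hyp\SWAP_{a,b}\in S$ by the ancilla rule, and if $n=1$ then $w\hyp\SWAP_{a,b}=w_1\hyp\SWAP_{a,b}$ is exactly the gate produced by the preceding lemma. So I would assume $n\ge 2$ and $a\ne b$.

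The hard part, and the main obstacle, is that the flag trick of Lemma \ref{WControlTrans} is forbidden here. There the conjunction $[x_1=w_1\wedge\cdots\wedge x_n=w_n]$ was written into an ancilla by flipping a $1$ to a $2$, but such a flip is not conservative, so none of those gates are available. The fix is to compute the same conjunction conservatively by \emph{token passing}: introduce ancillas $z_1,\ldots,z_{n+1}$ with $z_1=2$ and $z_2=\cdots=z_{n+1}=1$, so that a single $2$ (the token) sits at $z_1$, and move it forward one position at a time only when the current control symbol matches. Concretely, for $i=1,\ldots,n$ I would apply $w_i\hyp\SWAP_{1,2}$ to the triple $(x_i,z_i,z_{i+1})$; each such gate is supplied by the preceding lemma and is conservative.

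The correctness rests on a token-passing invariant, which I expect to be the only point needing care: by induction, after step $i$ the token sits at $z_{i+1}$ iff $x_1=w_1,\ldots,x_i=w_i$, and otherwise is stuck at the first mismatch. The key observation making this work is that just before step $i$ the pair $(z_i,z_{i+1})$ equals $\{1,2\}$ precisely when the token is poised at $z_i$ (so $z_{i+1}=1$ is still fresh), and equals $\{1,1\}$ otherwise; since $\SWAP_{1,2}$ only acts on a $\{1,2\}$ pair, no spurious swap can fire and the token never jumps past a mismatch. Hence after the full cascade $z_{n+1}=2$ iff $x_1\cdots x_n=w$.

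Finally I would apply $2\hyp\SWAP_{a,b}$ to $(z_{n+1},y,y')$, where $y,y'$ are the two target symbols: this swaps them exactly when the control matches and leaves $z_{n+1}$ untouched. Then I uncompute by applying $w_n\hyp\SWAP_{1,2},\ldots,w_1\hyp\SWAP_{1,2}$ in reverse order; since each gate is an involution and the target gate did not disturb the ancillas, this restores every $z_i$ to its initial value. Every gate used is a conservative controlled SWAP furnished by the preceding lemma, so the composite lies in $\la S\ra$ and realizes $w\hyp\SWAP_{a,b}$, completing the proof.
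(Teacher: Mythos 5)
Your proof is correct and is essentially the paper's own construction: the same token-passing cascade of conservative controlled swaps $w_i\hyp\SWAP_{1,2}$ on $x_iz_iz_{i+1}$, a controlled swap of the targets conditioned on $z_{n+1}$, and then uncomputation in reverse order. The only difference is cosmetic: the paper initializes $z_1=1$, $z_i=2$ for $i\ge 2$ (so the token is a $1$ and the final gate is $1\hyp\SWAP_{a,b}\in S$ directly), whereas you use a $2$-token amid $1$'s and finish with $2\hyp\SWAP_{a,b}$ from the preceding lemma.
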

  \begin{proof}
  Let $|w|=n$. Assume that we have $n+2$ inputs $x_1,\ldots,x_n,y_1,y_2$. 
  Introduce $n+1$ ancilla symbols $z_1,\ldots,z_{n+1}$. Initially $z_1=1$ and $z_i=2$ for $i\ge 2$. 
  The following sequence of operations implements $w\hyp \SWAP_{a,b}$ and fixes all ancilla symbols. 
  \begin{enumerate}
  \item For $i=1,\ldots,n$, apply $w_i\hyp\SWAP_{1,2}$ on $x_iz_iz_{i+1}$. 
  \item Apply $1\hyp\SWAP_{a,b}$ on $z_{n+1}y_1y_2$. 
  \item For $i=n,\ldots,1$, apply $w_i\hyp\SWAP_{1,2}$ on $x_iz_iz_{i+1}$. 
  \end{enumerate}
  \end{proof}
  
  \begin{proof}[Proof of Theorem \ref{ConsGen}]
  Clearly $S\sse \CONS$. So we only need to prove that $S$ generates $\CONS$.
  
  Similar to the proof of Theorem \ref{AllGenerate}, we only need to generate all transpositions $\tau_{u,v}$ where $u$ is a permutation of $v$. 
  If we have such $u,v$, then we can find a sequence $u=w_0,w_1,\ldots,w_m=v$ where $w_i$ and $w_{i+1}$ differ by exchanging two positions. 
  So we can assume that $u,v$ differ by exchanging two positions. 
  
  We can assume without loss of generality that the positions at which $u$ and $v$ differ are the last two symbols. Then $\tau_{u,v}$ is in the form $w\hyp\SWAP_{a,b}$. The theorem follows from Lemma \ref{WControlSWAP}. 
  \end{proof}
  
  \subsection{Generating $\CONS_\lm$}
  $\ALL$ and $\CONS$ fall into a more general class of gate classes. 
  \begin{defn}\label{LmCons}
  Let $\lm$ be a partition of $A$, which means $\lm=\{\lm_1,\ldots,\lm_l\}$ where $\lm_1,\ldots,\lm_l$ are disjoint nonempty subsets of $A$ whose union is $A$. 
  Define $\CONS_\lm$ to be the class of \textbf{$\lm$-conservative gates}, which are gates that preserve, for each $\lm_i\in \lm$, the total number of occurrences of elements in $\lm_i$. 
  \end{defn}
  \begin{eg}
  When $\lm=\{A\}$, $\CONS_\lm=\ALL$. 
  When $\lm=\{\{a\}:a\in A\}$, $\CONS_\lm=\CONS$.
  \end{eg}
  The following theorem is a simultaneous generalization of Theorem \ref{AllGenerate} and Theorem \ref{ConsGen}. 
  \begin{thm}\label{LmConsGeneration}
  Let $S$ be the following set gates: 
  \begin{enumerate}
  \item $\tau_{a,b}$ for all $1\le i\le l$ and all $a\ne b\in \lm_i$;
  \item $\tau_{aa,ab}$ for all $1\le i\le l$ with $|\lm_i|>1$ and one pair of $a\ne b\in \lm_i$;
  \item $1\hyp\SWAP_{a,b}$ for all $1\le i<j\le l$ and one pair of $a\in \lm_i$, $b\in \lm_j$.
  \end{enumerate}
  Then $\la S\ra=\CONS_\lm$.
  \end{thm}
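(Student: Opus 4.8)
The plan is to follow the same two-stage strategy used for Theorems \ref{AllGenerate} and \ref{ConsGen}: first reduce the claim to generating a well-chosen family of transpositions, and then produce those transpositions from $S$. The inclusion $\la S\ra\sse\CONS_\lm$ is immediate, since each listed generator preserves the number of occurrences of symbols in every block $\lm_i$, and $\CONS_\lm$ is closed under the generation rules; so the whole content is the reverse inclusion $\CONS_\lm\sse\la S\ra$.

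For a fixed arity $m$, let $\ka\colon A^m\to\bN^l$ record the coarse type $\ka(x)_i=\#\{j:x_j\in\lm_i\}$. The $\lm$-conservative bijections of $A^m$ are exactly the permutations preserving $\ka$, i.e.\ the group $\prod_t\mathrm{Sym}(\ka^{-1}(t))$, which is generated by transpositions $\tau_{u,v}$ with $\ka(u)=\ka(v)$. As before, it then suffices to generate, inside each fibre $\ka^{-1}(t)$, a set of transpositions whose underlying graph is connected, since transpositions indexing the edges of a connected graph generate the full symmetric group. I would use two kinds of edges: a \emph{recolour} edge between strings differing in a single position by two symbols of a common block, realised (after moving that position to the end) by $w\hyp\tau_{a,b}$ with $a,b$ in one block; and a \emph{transport} edge between strings differing by exchanging two positions whose symbols lie in different blocks, realised by $w\hyp\SWAP_{a,b}$ with $a\in\lm_i$, $b\in\lm_j$, $i\ne j$. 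A routine combinatorial lemma shows these connect each fibre: given $u,v$ of the same coarse type, transport edges permute the block-pattern of $u$ (its word of block-labels) into that of $v$ (the standard fact that swaps of differently-labelled items connect all arrangements of a fixed label-multiset), after which recolour edges adjust each position's symbol inside its now-common block.

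The real work is generating these two families. For the transport family I would adapt Theorem \ref{ConsGen}: from the single representative $1\hyp\SWAP_{a_0,b_0}$ of item (3), conjugating the two swapped coordinates by within-block transpositions from item (1) yields $1\hyp\SWAP_{a,b}$ for every cross-block pair; fixing the control to an ancilla equal to $1$ and invoking the ancilla rule yields the uncontrolled $\SWAP_{a,b}$; the relabelling trick of Theorem \ref{ConsGen} (which needs a third symbol, available as $k\ge3$) promotes the control to an arbitrary single symbol; and the chaining of Lemma \ref{WControlSWAP} promotes it to an arbitrary control word $w$. Here I would also note that same-block positional swaps $\SWAP_{p,q}=\tau_{pq,qp}$ can be synthesised from three within-block single-control transpositions (the three-$\mathrm{CNOT}$ identity), so the relabelling step is covered whether or not $1$ and $c$ share a block. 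For the recolour family, if $\lm=\{A\}$ then $\CONS_\lm=\ALL$ and I simply cite Theorem \ref{AllGenerate}; otherwise $l\ge2$. Items (1) and (2), together with Steps 1--3 of Lemma \ref{TauABCABD}, already produce every within-block single-control transposition $\al\hyp\tau_{a,b}$ with $\al,a,b$ in one block, including the two-element case: the control value stays inside the block, so these are pure conjugations and no parity obstruction arises.

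The difficulty—and the step I expect to be the main obstacle—is that the controls $w$ in $w\hyp\tau_{a,b}$ are arbitrary symbols, possibly from other blocks, so the multi-control accumulation of Lemma \ref{WControlTrans} cannot be carried out by toggling ancillas inside the target block alone. My proposed device is to use the cross-block controlled swaps just constructed as a routing primitive. Assuming $l\ge2$, I would lay out a chain of ancillas carrying a ``token'' encoded as a symbol $\al\in\lm_i$ versus a symbol $\be$ from another block, and, for each control position $j$, apply $w_j\hyp\SWAP_{\al,\be}$ so that the token advances one step exactly when $x_j=w_j$. The token reaches the last ancilla iff every control matches, at which point the single within-block gate $\al\hyp\tau_{a,b}$ performs the desired transposition; reversing the chain restores all ancillas, so the whole operation is $\lm$-conservative and lies in $\la S\ra$. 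The care needed is precisely in checking that this token-passing is faithful on every input—that the token halts at the first mismatch and is correctly un-computed—which is exactly what converts arbitrary, cross-block controls into admissible within-block actions. Assembling the two families with the connectivity lemma then gives $\CONS_\lm\sse\la S\ra$, completing the proof.
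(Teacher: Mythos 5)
Your proposal is correct, and although it shares the paper's outer skeleton --- the inclusion $\la S\ra \sse \CONS_\lm$ is immediate, and the reverse inclusion is reduced to generating transpositions $\tau_{u,v}$ between strings with equal block counts, realized by ``recolour'' and ``transport'' moves --- the machinery you use to produce those moves is genuinely different from the paper's. The paper routes everything through an intermediate lemma asserting $\CONS\sse\la S\ra$ (Lemma \ref{LmConsGenCons}): after building cross-block controlled swaps exactly as you do, it additionally constructs within-block controlled swaps $1\hyp\SWAP_{a,b}$ (with $a,b\in\lm_i$ and $1\notin\lm_i$) so that Theorem \ref{ConsGen} can be invoked wholesale; then the word-controlled within-block transposition $\tau_{wa,wb}$ is obtained (Lemma \ref{WControlTransLm}) by applying the conservative gate $w\hyp\SWAP_{a,b}$ to an ancilla pair $z_1z_2=ab$ together with $\tau_{aa,ab}$, and the block-pattern alignment in the final reduction is done by arbitrary conservative transpositions. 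You never establish $S\supseteq\CONS$: your fibre-connectivity argument decomposes a within-block positional exchange into two recolour moves (valid because the fibres are cut out by block counts only, not by exact symbol counts), so within-block controlled swaps are never needed, and your token-passing chain of cross-block controlled swaps $w_j\hyp\SWAP_{\al,\be}$ feeding a single within-block singly-controlled transposition $\al\hyp\tau_{a,b}$ plays the role of Lemma \ref{WControlTransLm}; I checked that the chain halts at the first mismatch and is correctly uncomputed, so this device is sound. What each approach buys: the paper's route is more modular and yields the standalone fact that $S$ generates $\CONS$, while your route runs on a leaner primitive set and, notably, sidesteps the one step of the paper's proof that is problematic as written --- in Step 4 of Lemma \ref{LmConsGenCons}, the circuit (ancilla $w=a$; $\SWAP_{1,a}$ on $wx$; $a\hyp\SWAP_{a,b}$ on $xyz$; $\SWAP_{1,a}$ on $wx$) also fires when the control wire carries $a$, so it implements a swap controlled on $x\in\{1,a\}$ rather than $1\hyp\SWAP_{a,b}$ and needs the two-ancilla displacement repair used in the paper's earlier $c\hyp\SWAP_{a,b}$ lemma; your recolour/transport decomposition makes that gate unnecessary altogether.
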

  
  We prove the theorem in several steps. 
  \begin{lemma}\label{LmConsGenAll}
  For all $1\le i\le l$, $S$ generates all gates that fix the input if at least one symbol of the input is not in $\lm_i$. 
  \end{lemma}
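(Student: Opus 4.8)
The plan is to fix the block $\lm_i$ and show that $\la S\ra$ contains every bijection of $A^m$ that restricts to the identity on all inputs having a symbol outside $\lm_i$. Such a gate is exactly an arbitrary permutation of the strings in $\lm_i^m$ extended by the identity, so under composition these gates form a copy of $\mathrm{Sym}(\lm_i^m)$. Exactly as in the proofs of Theorem \ref{AllGenerate} and Theorem \ref{ConsGen}, I would first reduce to generating transpositions $\tau_{u,v}$ with $u,v\in\lm_i^m$, and then, using that the graph on $\lm_i^m$ whose edges join strings differing in a single coordinate is connected, reduce further to such single-coordinate transpositions via the usual conjugation-along-a-path identity. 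Since coordinate permutations $P_\sm$ are free, I may assume the differing coordinate is the last one, so it remains to generate $w\hyp\tau_{a,b}$ for every $w\in\lm_i^{m-1}$ and every $a,b\in\lm_i$.

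When $|\lm_i|=1$ there is nothing to prove, the only such gate being the identity. When $|\lm_i|\ge 3$, the key observation is that the block-$i$ gates in $S$ — the transpositions $\tau_{a,b}$ with $a\ne b\in\lm_i$ together with the single gate $\tau_{aa,ab}$ — are precisely the generating set of Theorem \ref{AllGenerate} with the alphabet $A$ replaced by $\lm_i$. Because $|\lm_i|\ge3$, the three distinct symbols needed by the counter in the proofs of Lemma \ref{TauABCABD} and Lemma \ref{WControlTrans} can all be taken inside $\lm_i$, so those arguments apply verbatim and build $w\hyp\tau_{a,b}$ using only symbols and ancillas lying in $\lm_i$. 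Any gate produced this way automatically fixes every input containing a symbol outside $\lm_i$, as required.

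I expect the case $|\lm_i|=2$, say $\lm_i=\{a,b\}$, to be the main obstacle. Here the block-$i$ gates of types (1) and (2) behave, after identifying $\lm_i$ with $\bF_2$, like a NOT gate and a CNOT gate, and every gate they generate — even allowing ancillas valued in $\lm_i$ — is affine over $\bF_2$, since the ancilla rule merely restricts an affine map along a constant slice. A doubly controlled transposition such as $aa\hyp\tau_{a,b}$ has a degree-two component and is not affine, so, unlike the previous case, the desired gates provably cannot be built inside the block. The resolution must therefore invoke the cross-block gates $1\hyp\SWAP_{c,d}$ of type (3): these are count-preserving yet non-affine, and following the principle that a controlled SWAP computes a logical AND into a cleared ancilla, I would use such a gate together with an ancilla symbol set to a value of another block to record conjunctions of the block-$i$ control conditions, apply the resulting multiply controlled transposition, and then uncompute.

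The genuinely delicate points, and where this case departs from the $|\lm_i|\ge3$ argument, are threefold: the ancilla drawn from another block must be returned to its original value at the end; each block's total symbol count must be preserved by every intermediate gate, which holds because SWAP only relocates symbols and never changes their multiset; and one must first manufacture, from the single type-(3) generator whose control is the fixed symbol $1$, a cross-block controlled SWAP whose control lies on a symbol of $\lm_i$, conjugating by the type-(1) transpositions and possibly treating separately the cases according to which block contains $1$. Once an adequate stock of $\lm_i$-controlled cross-block SWAP gates is available, assembling the counter and checking that all ancillas are restored is routine.
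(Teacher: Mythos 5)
Your handling of the cases $|\lm_i|=1$ and $|\lm_i|\ge 3$ coincides with the paper's intended argument: the paper's entire proof of this lemma is the sentence that it is ``almost identical to the proof of Theorem \ref{AllGenerate},'' and for $|\lm_i|\ge 3$ that is accurate --- the constructions of Lemma \ref{TauABCABD} and Lemma \ref{WControlTrans} transfer verbatim with all symbols, control values and ancillas drawn from $\lm_i$, producing the transpositions $\tau_{wa,wb}$ with $w\in\lm_i^{m-1}$, $a,b\in\lm_i$, which fix everything outside $\lm_i^m$. Where you genuinely depart from the paper is the case $|\lm_i|=2$, and your obstruction is real: identifying $\lm_i$ with $\bF_2$, the block-$i$ generators of types (1) and (2) are NOT and CNOT, gates of other blocks fix every configuration in which some wire carries an $\lm_i$-value, and the ancilla rule preserves affineness, so without the type-(3) generators only affine gates on $\lm_i^m$ arise, while $aa\hyp\tau_{a,b}$ is not affine. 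Hence no argument literally ``identical to Theorem \ref{AllGenerate}'' can cover this case. This is not a vacuous refinement: Step 4 of Lemma \ref{LmConsGenCons} invokes the present lemma to obtain $a\hyp\SWAP_{a,b}$ with $a,b\in\lm_i$ and $1\notin\lm_i$, a non-affine gate, so the paper's omitted proof is genuinely incomplete exactly where your extra work is needed. Your repair --- use cross-block controlled swaps to walk a token from another block down a chain of ancillas, thereby computing the conjunction of the block-$i$ controls, fire the transposition off the token, then uncompute --- is sound and can be completed into a full proof.

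One mechanism in your sketch is wrong as stated, however. When $1\notin\lm_i$, conjugating the control wire by type-(1) transpositions can never yield a cross-block SWAP controlled on a symbol $a\in\lm_i$: type-(1) gates preserve the block of every wire's value, so any control condition obtained by such conjugation is a subset of the block containing $1$. Moreover, the natural substitute --- conjugating by the cross-block value swap $\tau_{1a,a1}$ (itself available from a type-(3) gate with an always-on ancilla control, after retargeting its swapped values by type-(1) conjugation) --- produces a gate that fires when the control is $1$ \emph{or} $a$, not only when it is $a$. The fix is to compose rather than conjugate: apply $1\hyp\SWAP_{c,d}$; apply $\tau_{1a,a1}$ to the control wire against an ancilla set to $1$; apply $1\hyp\SWAP_{c,d}$ again; undo the value swap. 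Since SWAP is an involution, the two firings cancel when the control was $1$, and exactly one firing occurs when it was $a$, giving $a\hyp\SWAP_{c,d}$ with the ancilla restored. (The same two-application trick supplies the $\lm_j$-controlled block-$i$ transpositions needed at the end of the token chain.) With this replacement, and the case $1\in\lm_i$ handled by ordinary type-(1) conjugation as you indicate, your outline is a correct and complete proof, and is strictly more careful than the one the paper chose to omit.
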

  \begin{proof}
  This is almost identical to the proof of Theorem \ref{AllGenerate}. Omitted.
  \end{proof}
  \begin{lemma}\label{LmConsGenCons}
  $S$ generates $\CONS$.
  \end{lemma}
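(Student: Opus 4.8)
The plan is to reduce the statement to Theorem~\ref{ConsGen}. Since that theorem shows $\la\{1\hyp\SWAP_{a,b}:a\ne b\in A\}\ra=\CONS$, it suffices to prove that $S$ generates $1\hyp\SWAP_{a,b}$ for every pair $a\ne b\in A$. Two families of gates are already available inside $\la S\ra$. First, by Lemma~\ref{LmConsGenAll}, for each block $\lm_i$ the class $\la S\ra$ contains every gate supported on $\lm_i$ (acting as the identity whenever some input symbol lies outside $\lm_i$); in particular it contains all within-block controlled swaps $c\hyp\SWAP_{a,b}$ with $c,a,b\in\lm_i$. Second, starting from the single cross-block gate in item~(3) and conjugating its two target positions by a common within-block relabeling $\mu=\tau_{p,p_0}\tau_{q,q_0}$ (generated by item~(1)), one obtains every cross-block controlled swap $1\hyp\SWAP_{p,q}$ with $p\in\lm_i$, $q\in\lm_j$, $i\ne j$, the control value remaining $1$ throughout.

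Let $\lm_{i_0}$ be the block containing $1$. I would then split the target gate $1\hyp\SWAP_{a,b}$ into three cases. If $a,b\in\lm_{i_0}$, then the gate is supported on $\lm_{i_0}$ and is handled by Lemma~\ref{LmConsGenAll}. If $a$ and $b$ lie in different blocks, the gate is exactly one of the cross-block swaps produced above. The remaining and essential case is $a,b\in\lm_i$ with $1\notin\lm_i$: here the control symbol lives in a different block from the two symbols being swapped, so neither available family applies directly, and a single-symbol relabeling of the control is forbidden, since swapping the control value for one inside $\lm_i$ moves mass between blocks and is not $\lm$-conservative.

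For this case the key is to manufacture a cross-block-controlled within-block transposition $1\hyp\tau_{\al,\be}$ with $\al\ne\be\in\lm_i$, which is $\lm$-conservative even though it is not conservative, so nothing prevents it from lying in $\la S\ra$. I would build it by conjugation. Let $w$ be an ancilla initialized to $1$, and let $G=(1\hyp\SWAP_{1,\be})\circ(1\hyp\SWAP_{1,\al})$ act on $(x_1,w,z)$. When $x_1=1$ the gate $G$ swaps the contents of $w$ and $z$ provided one of them equals $1$ (which holds since $w$ starts at $1$), and when $x_1\ne1$ it is the identity; moreover $G$ is an involution on the relevant states. Hence applying $G$, then the single-symbol within-block transposition $\tau_{\al,\be}$ to $w$, then $G$ again, flips the value of $z$ between $\al$ and $\be$ precisely when $x_1=1$ and restores $w$ to $1$; this realizes $1\hyp\tau_{\al,\be}$. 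With this gate in hand I convert the external control into an internal one: introduce a flag ancilla $z$ initialized to $\be$, apply $1\hyp\tau_{\be,\al}$ on $(x_1,z)$ so that $z=\al$ iff $x_1=1$, apply the within-block controlled swap $\al\hyp\SWAP_{a,b}$ on $(z,x_2,x_3)$ (available by Lemma~\ref{LmConsGenAll}), and finally undo the flag with a second $1\hyp\tau_{\be,\al}$. The net effect is $1\hyp\SWAP_{a,b}$.

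The main obstacle is exactly this last case: turning a control that sits outside $\lm_i$ into a usable control for a swap internal to $\lm_i$. The \emph{borrowing} gate $G$, which temporarily exports a symbol of $\lm_i$ out of its block using the value $1$, is what lets the forbidden cross-block relabeling be replaced by an honest $\lm$-conservative construction, and note that it works even when $\lm_{i_0}=\{1\}$. Verifying that $G$ acts as claimed and is inert when $x_1\ne1$ is the one place requiring a careful, though routine, case check. Once all $1\hyp\SWAP_{a,b}$ are generated, Theorem~\ref{ConsGen} finishes the proof.
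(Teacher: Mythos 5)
Your proof is correct, and at the top level it follows the same plan as the paper: generate $1\hyp\SWAP_{a,b}$ for every pair $a\ne b\in A$ and then invoke Theorem~\ref{ConsGen}; your handling of cross-block pairs by conjugating the item-(3) gate with within-block transpositions is exactly the paper's Steps 1--3. Where you genuinely diverge is the hard case $a,b\in\lm_i$ with $1\notin\lm_i$ (the paper's Step 4). There the paper introduces an ancilla $w$ initialized to $a$, applies the uncontrolled $\SWAP_{1,a}$ to $wx$, then $a\hyp\SWAP_{a,b}$ to $xyz$, then $\SWAP_{1,a}$ to $wx$ again. That circuit is in fact erroneous: on an input with control $x=a$ the conjugation does nothing (the pair $wx=aa$ is fixed by $\SWAP_{1,a}$), yet the middle gate still fires, so the circuit implements ``swap if $x\in\{1,a\}$'' rather than $1\hyp\SWAP_{a,b}$; the uncontrolled conjugation destroys the distinction between $x=1$ and $x=a$. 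Your construction avoids precisely this failure: every auxiliary gate you use --- the involution $G=(1\hyp\SWAP_{1,\be})\circ(1\hyp\SWAP_{1,\al})$ and the flag update $1\hyp\tau_{\be,\al}$ --- is itself controlled on the original control symbol $x_1$, so inputs with $x_1\ne 1$ are genuinely inert, and it is the flag ancilla, not the control symbol, that carries the in-block value $\al$ triggering the within-block controlled swap from Lemma~\ref{LmConsGenAll}. So your route not only differs from the paper's in its key step but repairs it; a minimal fix in the paper's own style would be the compressed version of your idea (ancillas $u=1$, $w=\al$; apply $1\hyp\SWAP_{1,\al}$ on $xuw$, then $\al\hyp\SWAP_{a,b}$ on $uyz$, then undo). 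One minor quibble: your claim that, when $x_1=1$, $G$ swaps $w$ and $z$ ``provided one of them equals $1$'' is too strong, since $G$ fixes $(1,c)$ for $c\notin\{1,\al,\be\}$; this does not affect the argument, because in that case the middle $\tau_{\al,\be}$ acts trivially on $w=1$ and the net effect is still the identity, as required.
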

  \begin{proof}
  \textbf{Step 1.} Assume in (3) of the definition of $S$ in Theorem \ref{LmConsGeneration}, we choose $a=a_{ij}$, $b=b_{ij}$. Then for all $1\le i<j\le l$, $S$ generates $1\hyp \SWAP_{a_{ij},b_{ij}}$.
  
  \textbf{Step 2.} $S$ generates $1\hyp \SWAP_{a_{ij},c}$ for all $1\le i<j\le l$ and $c\in \lm_j$.
  Assume we have input symbols $x$, $y$, $z$. 
  By Lemma \ref{LmConsGenAll}, $S$ generates $\tau_{b_{ij},c}$. 
  The following sequence of operations implements $1\hyp \SWAP_{a_{ij},c}$.
  \begin{enumerate}
  \item Apply $\tau_{b_{ij},c}$ on $z$. 
  \item Apply $\tau_{b_{ij},c}$ on $y$. 
  \item Apply $1\hyp \SWAP_{a_{ij},b_{ij}}$ on $xyz$. 
  \item Apply $\tau_{b_{ij},c}$ on $z$. 
  \item Apply $\tau_{b_{ij},c}$ on $y$.
  \end{enumerate}
  
  \textbf{Step 3.} $S$ generates $1\hyp \SWAP_{a,b}$ for all $1\le i<j\le l$ and $a\in \lm_i$, $b\in \lm_j$.
  This is similar to step 2 and omitted. 
  
  \textbf{Step 4.} $S$ generates $1\hyp\SWAP_{a,b}$ for all $1\le i\le l$ and $a,b\in \lm_i$.
  If $1\in \lm_i$ then this follows from Lemma \ref{LmConsGenAll}. Assume $1\not \in \lm_i$. By step 3, $S$ generates $\SWAP_{1,a}$. By Lemma \ref{LmConsGenAll}, $S$ generates $a\hyp\SWAP_{a,b}$.
  
  Assume we have input symbols $x$, $y$, $z$. Introduce an ancilla symbol $w$ which is initially $a$. 
  The follows sequence of operations implements $1\hyp\SWAP_{a,b}$ and fixes all ancilla symbols.
  \begin{enumerate}
  \item Apply $\SWAP_{1,a}$ to $wx$. 
  \item Apply $a\hyp\SWAP_{a,b}$ to $xyz$. 
  \item Apply $\SWAP_{1,a}$ to $wx$.
  \end{enumerate}
  
  \textbf{Step 5.} Apply Theorem \ref{ConsGen}.
  \end{proof}
  \begin{lemma}\label{WControlTransLm}
  $S$ generates $\tau_{wa,wb}$ where $w$ is a string and $a,b\in \lm_i$ for some $1\le i\le l$.
  \end{lemma}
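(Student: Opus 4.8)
The plan is to read $\tau_{wa,wb}$ as the controlled transposition $w\hyp\tau_{a,b}$ that applies $\tau_{a,b}$ (with $a,b\in\lm_i$) to the last symbol exactly when the first $|w|$ symbols spell out $w$. We may assume $a\ne b$, since otherwise the gate is the identity; in particular $|\lm_i|\ge 2$. If $|w|=0$ the gate is just $\tau_{a,b}$, which lies in $S$ by part (1) of Theorem \ref{LmConsGeneration}, so assume $n:=|w|\ge 1$. The key is to combine the two powers already established: by Lemma \ref{LmConsGenCons} we have $\CONS\sse\la S\ra$, and by Lemma \ref{LmConsGenAll} every gate supported on strings over $\lm_i$ lies in $\la S\ra$. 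The first lets us test the control word conservatively, and the second lets us perform the value-changing transposition once the test succeeds.

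Concretely, I would adapt the token-passing construction from the proof of Lemma \ref{WControlSWAP}, but keep the token inside $\lm_i$. On inputs $x_1,\dots,x_n$ (the control) and $y$ (the target), introduce $n+1$ ancillas $z_1,\dots,z_{n+1}$ with $z_1=a$ and $z_2=\dots=z_{n+1}=b$. First, for $i=1,\dots,n$ apply $w_i\hyp\SWAP_{a,b}$ to $x_iz_iz_{i+1}$; each such gate moves the single token $a$ from $z_i$ to $z_{i+1}$ precisely when $x_i=w_i$ and the token is currently at $z_i$, and acts trivially otherwise (the token only ever advances). Hence after this sweep $z_{n+1}=a$ if and only if $x_j=w_j$ for all $j$, i.e. the control word matches; otherwise $z_{n+1}=b$. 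Next apply $\tau_{aa,ab}$ to $z_{n+1}y$, which applies $\tau_{a,b}$ to $y$ exactly when $z_{n+1}=a$ and leaves $z_{n+1}$ untouched. Finally run the first sweep in reverse to restore all ancillas to their initial values. The net effect on $x_1,\dots,x_n,y$ is exactly $\tau_{wa,wb}$.

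It remains to check that every gate used lies in $\la S\ra$. Each $w_i\hyp\SWAP_{a,b}$ merely swaps two symbols under a control and therefore preserves every individual symbol count, so it is conservative and lies in $\la S\ra$ by Lemma \ref{LmConsGenCons}; the read-off gate $\tau_{aa,ab}$ is supported on strings over $\lm_i$ (all the symbols $a,b$ lie in $\lm_i$) and lies in $\la S\ra$ by Lemma \ref{LmConsGenAll}. The ancilla rule then yields $\tau_{wa,wb}\in\la S\ra$. The one point requiring care — and the main conceptual step — is to recognize that the conjunction test over an arbitrary control word $w$ must be carried out conservatively, by shuttling a token among $\lm_i$-valued ancillas rather than flipping a fresh symbol as in the proof of Lemma \ref{WControlTrans}; only then does the bookkeeping stay inside $\CONS\sse\la S\ra$, while the genuinely non-conservative transposition is quarantined into a single within-$\lm_i$ controlled gate supplied by Lemma \ref{LmConsGenAll}. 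Verifying that the token reaches $z_{n+1}$ if and only if the whole word matches, and that the reverse sweep cleanly uncomputes the ancillas, is then routine.
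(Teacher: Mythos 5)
Your proof is correct and follows essentially the same route as the paper: detect the control word $w$ conservatively using ancillas valued in $\{a,b\}\sse\lm_i$ (justified by Lemma \ref{LmConsGenCons}), apply the read-off gate $\tau_{aa,ab}$ supplied by Lemma \ref{LmConsGenAll}, then uncompute. The only difference is one of granularity: the paper simply notes that the single gate $w\hyp\SWAP_{a,b}$ is itself conservative and hence available directly from Lemma \ref{LmConsGenCons} (needing only two ancillas $z_1=a$, $z_2=b$), whereas you re-derive its effect by a token-passing chain of singly-controlled swaps — correct, but unnecessary extra work.
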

  \begin{proof}
  Assume $|w|=n$. 
  Assume we have $n+1$ inputs $x_1,\ldots,x_n,y$.
  Introduce two ancilla symbols $z_1,z_2$. Initially, $z_1=a$, $z_2=b$.
  $S$ generates $w\hyp\SWAP_{a,b}$ by Lemma \ref{LmConsGenCons}.
  $S$ generates $\tau_{aa,ab}$ by Lemma \ref{LmConsGenAll}.
  
  The following sequence of operations implements $\tau_{wa,wb}$ and fixes all ancilla symbols.
  \begin{enumerate}
  \item Apply $w\hyp\SWAP_{a,b}$ on $x_1\ldots x_nz_1z_2$. 
  \item Apply $\tau_{aa,ab}$ on $z_2y$. 
  \item Apply $w\hyp\SWAP_{a,b}$ on $x_1\ldots x_nz_1z_2$. 
  \end{enumerate}
  \end{proof}
  \begin{proof}[Proof of Theorem \ref{LmConsGeneration}]
  Clearly $S\sse \CONS_\lm$. We only need to prove that $S$ generates $\CONS_\lm$.

  Similar to the proof of Theorem \ref{AllGenerate}, we only need to prove that $S$ generates transpositions $\tau_{u,v}$ where $u,v\in A^m$ for some $m$, and for all $1\le i\le l$, the number of occurrences in $u$ of elements in $\lm_i$ is the same as that of $v$.
  
  Assume $u,v\in A^m$ has the above property.
  We can find $w\in A^m$ such that $w$ is a permutation of $v$, and for all $1\le j\le m$, $w_j$ and $u_j$ are both in $\lm_i$ for some $1\le i\le l$. 
  $S$ generates $\tau_{w,v}$ by Lemma \ref{LmConsGenCons}. So we can replace $v$ with $w$ and assume that for all $1\le j\le m$, $u_j$ and $v_j$ are both in $\lm_i$ for some $1\le i\le l$.
  
  Assume we have such $u,v$. Then we can find a sequence $u=w_0,w_1,\ldots,w_p=v$ such that $w_i$ and $w_{i+1}$ differ by one position $j$, and $w_{i,j}$ and $w_{i+1,j}$ are both in $\lm_h$ for some $1\le h\le l$. So we can assume that $u$ and $v$ satisfies this.
  
  So we only need to generate $\tau_{wa,wb}$ where $w$ is some string, and $a,b\in \lm_i$ for some $1\le i\le l$. This follows from Lemma \ref{WControlTransLm}.
  \end{proof}
  
  \subsection{Single gate generation}
  The following proposition points out the reason why finitely-generated gate classes are usually generated by a single gate. 
  \begin{prop}\label{SingleGeneration}
  Let $S$ be a finite set of reversible gates such that each gate in $S$ has a fixed point.
  Then $\la S\ra$ is generated by a single gate.
  \end{prop}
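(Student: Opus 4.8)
The plan is to fuse the finitely many gates of $S$ into a single gate by forming their tensor product, and then to recover each individual gate by using its fixed point to switch off all the other tensor factors through the ancilla rule. Concretely, write $S=\{F_1,\ldots,F_m\}$ with $F_j:A^{k_j}\to A^{k_j}$, and let $p_j\in A^{k_j}$ be a fixed point of $F_j$, so that $F_j(p_j)=p_j$. I would then define the single gate
\[
G=F_1\ot F_2\ot\cdots\ot F_m:A^{k_1+\cdots+k_m}\to A^{k_1+\cdots+k_m},
\]
viewing its input as partitioned into consecutive blocks $B_1,\ldots,B_m$ with $|B_j|=k_j$, where $F_j$ acts on $B_j$ and leaves the others untouched. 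The goal is to show $\la G\ra=\la S\ra$, which immediately gives the proposition.

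The inclusion $\la G\ra\sse\la S\ra$ is the easy direction: $G$ is built from $F_1,\ldots,F_m\in S$ by repeated use of the tensor product rule, so $G\in\la S\ra$, and since $\la G\ra$ is the smallest gate class containing $G$ we get $\la G\ra\sse\la S\ra$.

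For the reverse inclusion $\la S\ra\sse\la G\ra$ it suffices to show $F_j\in\la G\ra$ for each $j$, and this is the crux of the argument. Because $G$ is a tensor product, block $B_i$ is acted on only by $F_i$, independently of the other blocks; hence if $B_i$ is set to $p_i$ its value is preserved by $G$ for every choice of the remaining inputs. Permutation gates are free and $\la G\ra$ is closed under composition, so I may freely relabel the inputs of $G$ via $P_\sm$; arrange matters so that the blocks $B_i$ with $i\ne j$ come first and $B_j$ comes last. Now apply the ancilla rule to this rearranged gate, fixing the leading symbols to the concatenation of the fixed points $p_i$ for $i\ne j$: since each such block is preserved, the rule produces exactly $F_j$ acting on $B_j$. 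Thus $F_j\in\la G\ra$, giving $S\sse\la G\ra$ and therefore $\la S\ra\sse\la G\ra$, which completes the equality.

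The single genuine subtlety — and the only place the hypothesis enters — is this extraction step: the ancilla rule can fix only those symbols that $G$ leaves invariant, so we really do need each $F_i$ to possess a fixed point in order to deactivate the block $B_i$. (The permutation gate plays a purely bookkeeping role, moving the symbols to be fixed to the front, since the ancilla rule in Definition \ref{GenRules} fixes the \emph{initial} symbols.) The degenerate case $S=\es$ is handled trivially by taking $G$ to be the identity gate.
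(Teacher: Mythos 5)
Your proof is correct and takes exactly the same approach as the paper's: form the tensor product $F_1\ot\cdots\ot F_m$, then recover each $F_j$ via the ancilla rule by setting the other blocks to their fixed points. In fact you are slightly more careful than the paper, which leaves implicit both the use of the permutation rule to move the fixed blocks into the leading positions required by Definition \ref{GenRules} and the trivial case $S=\es$.
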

  \begin{proof}
  Let $S=\{F_1,\ldots,F_n\}$. 
  Consider the gate $F=F_1\ot \cdots \ot F_n$. By the tensor product rule of generation, $F\in \la S\ra$. So we only need to prove that $F$ generates $F_i$ for all $i$. 
  
  Fix an $i$. We use the ancilla rule of generation. Because every $F_j$ has a fixed point, for $j\ne i$, we can take the input gates for $F$ that correspond to $F_j$ to be a fixed point of $F_j$. Then $F$ fixes all inputs symbols except for the inputs corresponding to $F_i$. 
  \end{proof}
  \begin{coro}
  $\CONS_\lm$ is generated by a single gate for all $\lm$.
  \end{coro}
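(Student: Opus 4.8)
The plan is to read off the Corollary directly from the two immediately preceding results: Theorem \ref{LmConsGeneration} identifies $\CONS_\lm$ with $\la S\ra$ for the explicit finite generating set $S$ described there, while Proposition \ref{SingleGeneration} asserts that $\la S\ra$ is generated by a single gate as soon as $S$ is finite and every member of $S$ has a fixed point. So the entire argument reduces to verifying the two hypotheses of Proposition \ref{SingleGeneration} for this particular $S$.

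First I would observe that $S$ is finite: each of the three families comprising $S$ is indexed by finitely much data (pairs of distinct elements within a block $\lm_i$, or pairs of blocks $\lm_i,\lm_j$), and $A$ is finite, so $S$ is finite.

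Next I would check the fixed-point condition for each generator. The observation that makes this immediate is that every gate in $S$ is a single transposition $\tau_{u,v}$ on some $A^m$: the type-(1) and type-(2) gates are transpositions on $A^1$ and $A^2$ respectively, and a type-(3) gate $1\hyp\SWAP_{a,b}$ equals $\tau_{1ab,1ba}$, a transposition on $A^3$. A transposition on $A^m$ fixes every input except the two it interchanges, hence has a fixed point whenever $|A|^m\ge 3$. For types (2) and (3) this is automatic ($|A|^2,|A|^3\ge 9$), and for the type-(1) gates $\tau_{a,b}$ it is exactly the standing hypothesis $|A|=k\ge 3$ of this section: any third symbol $c\ne a,b$ is fixed. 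This is the only place where $k\ge 3$ is genuinely used.

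With both hypotheses confirmed, Proposition \ref{SingleGeneration} produces a single gate generating $\la S\ra=\CONS_\lm$, which completes the proof. I do not expect a real obstacle here, since all the content lives in the two cited results; the only step requiring any care is confirming the fixed-point condition for the length-one type-(1) generators, where the assumption $k\ge 3$ is essential (and indeed fails when $|A|=2$, as $\tau_{a,b}$ then exhausts the alphabet).
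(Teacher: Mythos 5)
Your proposal is correct and follows exactly the paper's route: the paper's own proof is the one-line observation that every gate in the generating set $S$ of Theorem \ref{LmConsGeneration} has a fixed point, which (implicitly together with the finiteness of $S$) triggers Proposition \ref{SingleGeneration}. Your write-up simply makes explicit the details the paper leaves tacit, including the correct remark that the standing hypothesis $k\ge 3$ is what guarantees a fixed point for the unary transpositions $\tau_{a,b}$.
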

  \begin{proof}
  Each gate in the set $S$ in Theorem \ref{LmConsGeneration} has fixed points. 
  \end{proof}
  
  \section{Base change}
  \begin{prop}\label{SkAction}
  Let $L_k$ be the lattice of gate classes when $|A|=k$. 
  Then $L_k$ is naturally equipped with a nontrivial $S_k$-action. 
  \end{prop}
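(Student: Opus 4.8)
We first promote the permutation action of $S_k$ on the alphabet $A=\{1,\ldots,k\}$ to an action on gates. For $\pi\in S_k$ and $n\ge 0$, let $\rho_\pi\colon A^n\to A^n$ be the relabelling map acting coordinatewise, $\rho_\pi(a_1,\ldots,a_n)=(\pi(a_1),\ldots,\pi(a_n))$. Given a reversible gate $F\colon A^n\to A^n$, define
\[
\pi\cdot F:=\rho_\pi\circ F\circ\rho_\pi^{-1}\colon A^n\to A^n,
\]
which is again a bijection, hence a reversible gate of the same arity. Since $\rho_{\pi\pi'}=\rho_\pi\circ\rho_{\pi'}$, conjugation gives $(\pi\pi')\cdot F=\pi\cdot(\pi'\cdot F)$ and $\id\cdot F=F$, so $F\mapsto\pi\cdot F$ is a (left) $S_k$-action on the set of all reversible gates by bijections (the one for $\pi$ having inverse the one for $\pi^{-1}$).

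Next I would check that this action is compatible with all four generation rules, so that it descends to $L_k$. The key observation is that relabelling fixes every permutation gate, $\pi\cdot P_\sigma=P_\sigma$, because $P_\sigma$ only rearranges coordinates and never inspects their values; thus the ``free'' generators remain available equivariantly on both sides. Conjugation is a group homomorphism on $\mathrm{Sym}(A^n)$, giving $\pi\cdot(F\circ G)=(\pi\cdot F)\circ(\pi\cdot G)$, and it acts coordinatewise, giving $\pi\cdot(F\ot G)=(\pi\cdot F)\ot(\pi\cdot G)$. For the ancilla rule, if $F$ fixes its first $l$ inputs at the string $(a_1,\ldots,a_l)$ and acts as $F'$ on the rest, then $\pi\cdot F$ fixes its first $l$ inputs at $(\pi(a_1),\ldots,\pi(a_l))$ and acts as $\pi\cdot F'$ on the rest, so $\pi\cdot F'$ is obtained from $\pi\cdot F$ by the ancilla rule. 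Consequently any generation sequence for $F$ from a set $S$ transports to one for $\pi\cdot F$ from $\pi\cdot S$; applying this and its analogue for $\pi^{-1}$ yields $\pi\cdot\la S\ra=\la\pi\cdot S\ra$. In particular $\pi\cdot S$ is a gate class whenever $S$ is, and since $F\mapsto\pi\cdot F$ is a bijection it preserves inclusions in both directions. Thus each $\pi$ acts as an order-automorphism of $L_k$, and an order-automorphism of a complete lattice automatically preserves arbitrary joins and meets, so we obtain a homomorphism $S_k\to\aut(L_k)$.

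Finally I would exhibit nontriviality using the classes $\CONS_\lm$ of Definition \ref{LmCons}. Relabelling sends a $\lm$-conservative gate to a $\pi(\lm)$-conservative one, where $\pi(\lm)=\{\pi(\lm_1),\ldots,\pi(\lm_l)\}$: writing an input of $\pi\cdot F$ as $\rho_\pi(x)$, the number of its symbols lying in $\pi(\lm_i)$ equals the number of symbols of $x$ lying in $\lm_i$, and this count is preserved through $\pi\cdot F$ exactly when the corresponding count is preserved through $F$. Hence $\pi\cdot\CONS_\lm=\CONS_{\pi(\lm)}$. Taking $\lm=\{\{1,2\},\{3\},\ldots,\{k\}\}$ and $\pi=(1\,3)$ (possible since $|A|=k\ge 3$), we get $\pi(\lm)=\{\{2,3\},\{1\},\{4\},\ldots,\{k\}\}$, and these classes differ: $\tau_{1,2}\in\CONS_\lm$ but $\tau_{1,2}\notin\CONS_{\pi(\lm)}$, since swapping a $1$ and a $2$ changes the number of $1$'s. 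Therefore the action is nontrivial.

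I expect the main obstacle to be the bookkeeping in the compatibility with the ancilla rule---verifying that relabelling the fixed ancilla values really reproduces the ancilla reduction for $\pi\cdot F$---together with being careful that the conjugation convention yields a genuine left action rather than an anti-action. The remaining verifications (the other three rules, order preservation, and the computation $\pi\cdot\CONS_\lm=\CONS_{\pi(\lm)}$) are routine.
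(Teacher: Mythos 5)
Your proposal is correct and takes essentially the same route as the paper: the paper's $F^\sm$ is exactly your conjugation $\rho_\pi\circ F\circ\rho_\pi^{-1}$, after which the paper simply asserts that $S^\sm=\{F^\sm:F\in S\}$ is again a gate class and stops, so your verification of the four generation rules and your explicit nontriviality argument via $\pi\cdot\CONS_\lm=\CONS_{\pi(\lm)}$ only fill in details the paper leaves implicit. One small caveat: your nontriviality witness needs $k\ge 3$, while the proposition lives in a section allowing $k\ge 2$; for $k=2$ the action is the duality of \cite{AGS15} and nontriviality requires a different witness (a non-self-dual class from their classification).
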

  \begin{proof}
  Let $\sm\in S_k$ be a permutation of $A$.
  For a gate $F:A^l\to A^l$, define reversible gate $F^\sm:A^l\to A^l$ that maps $(a_1,\ldots,a_l)\in A^l$ to $\sm(F(\sm^{-1}(a_1),\ldots,\sm^{-1}(a_l)))$, where the outer $\sm$ means applying $\sm$ on every input symbol.
  For every gate class $S$, define $S^\sm=\{F^\sm:F\in S\}$.
  It is easy to verify that $S^\sm$ is also a gate class. 
  Then $\sm:L_k\to L_k$ gives the desired $S_k$-action.
  \end{proof}
  \begin{rmk}
  This generalizes the notion of dual gate and dual gate class when $|A|=2$ defined in \cite{AGS15}.
  \end{rmk}
  
  \begin{thm}\label{PosetEmbedding}
  Let $L_k$ be the lattice of gate classes when $|A|=k$.
  Then there is a poset embedding $P:L_k\to L_{k+1}$ for $k\ge 2$.
  \end{thm}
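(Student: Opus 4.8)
The plan is to realise the embedding concretely through an ``extension by the identity'' of gates, and then to prove the order-reflecting property by a circuit-level simulation. Write $A=\{1,\ldots,k\}$ and $A^\p=\{1,\ldots,k+1\}$, viewing $A\sse A^\p$ with a single new symbol $k+1$. For a gate $F:A^l\to A^l$ define its extension $\ti F:(A^\p)^l\to(A^\p)^l$ by $\ti F(x)=F(x)$ when $x\in A^l$ and $\ti F(x)=x$ when $x$ contains at least one occurrence of $k+1$. This is a bijection, and its essential feature is that it is $(k+1)$-inert: it fixes every string in which $k+1$ appears. Given a gate class $S\in L_k$, set $P(S)=\la \ti F: F\in S\ra$, the gate class over $A^\p$ generated by the extensions. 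Since $S\sse T$ gives $\{\ti F:F\in S\}\sse\{\ti F:F\in T\}$, monotonicity $P(S)\le P(T)$ is immediate, so $P$ is a well-defined monotone map $L_k\to L_{k+1}$.

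It remains to prove that $P$ reflects order, i.e. $P(S)\le P(T)\Rightarrow S\le T$; this also yields injectivity and hence that $P$ is a poset embedding. Because every $F\in S$ satisfies $\ti F\in P(S)$, the whole implication reduces to the single claim: if $\ti F\in \la \ti T\ra$ over $A^\p$, then $F\in \la T\ra=T$ over $A$. In other words, allowing the extra symbol $k+1$ and ancillas valued in it must not let the extensions generate the extension of any new $A$-gate.

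To prove this claim I would pass to the process description of generation given in the remark following Definition \ref{GenRules}. Realise $\ti F$ by a circuit over $A^\p$ that introduces ancillas with fixed initial values in $A^\p$ and applies a sequence of gates drawn from $\{\ti H:H\in T\}$, with position permutations absorbed into the choice of wiring. Restrict attention to inputs lying in $A^l$; on these the computed map is $\ti F|_{A^l}=F$. The crucial observation is that on such inputs the set of wires holding $k+1$ never changes: a generator $\ti H$ applied to a wire-tuple containing a $k+1$ acts as the identity by inertness, while a generator applied to wires all valued in $A$ acts as $H$ and keeps those values in $A$. Hence the ancillas initialised to $k+1$ sit inertly throughout, and every gate touching them is a no-op on $A$-valued inputs. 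Deleting those ancillas together with the gate applications that touch them leaves a circuit that stays entirely within $A$, uses only the gates $H\in T$, still restores its remaining $A$-valued ancillas, and still computes $F$ on $A^l$. This exhibits $F\in\la T\ra=T$, proving the claim.

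The main obstacle is precisely the treatment of ancillas initialised to the new symbol $k+1$: a priori such ancillas could interact with the computation and let the larger alphabet generate genuinely new $A$-gates. The inertness of the extensions $\ti H$ is what rules this out, forcing every $k+1$ to stay frozen in place and every gate touching it to be trivial, so that the $A^\p$-circuit projects cleanly onto an $A$-circuit. Once the claim is established, monotonicity together with the implication ``$\ti F\in P(T)\Rightarrow F\in T$'' gives $S\sse T$ whenever $P(S)\sse P(T)$, completing the verification that $P$ is a poset embedding; one may also record the retraction as the monotone map $Q(S^\p)=\{F:\ti F\in S^\p\}$, for which $Q\circ P=\id$.
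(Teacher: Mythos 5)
Your proposal is correct and takes essentially the same route as the paper: the identical extension-by-identity embedding (the paper's $PF$ is your $\ti F$), monotonicity for free, and order-reflection proved by the same ancilla analysis --- symbols initialized to $k+1$ stay frozen, every gate application touching them is a no-op on $A$-valued inputs, and deleting them projects the circuit over the larger alphabet onto an $A$-circuit generating $F$ from $T$. The only blemish is your closing aside: $Q(S^\p)=\{F:\ti F\in S^\p\}$ need not be a gate class for arbitrary $S^\p\in L_{k+1}$ (for the trivial class over $A^\p$ it contains only identity gates and no nontrivial wire permutation $P_\sm$, since $\widetilde{P_\sm}$ fixes all strings containing $k+1$ while no nontrivial wire permutation over $A^\p$ does), so $Q$ is not a well-defined map $L_{k+1}\to L_k$; but nothing in your actual argument relies on that remark.
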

  
  \begin{proof}
  Let $F$ be a reversible gate over alphabet $A=\{1,\ldots,k\}$. 
  Define $PF$ to be the reversible gate over alphabet $A^\p=\{1,\ldots,k+1\}$ that fixes the input when at least one of the input symbols is $k+1$, and acts as $F$ otherwise. 
  
  Let $S\in L_k$ be a gate class. Define $PS=\la PF:F\in S\ra$ to be a gate class over $A^\p$. This defines a map between sets $P:L_k\to L_{k+1}$. $P$ is clearly order-preserving, so $P$ is a map between posets. 
  
  So we only need to prove that $P$ is an embedding, i.e. $P$ is an injection of sets.
  This means for every two different gate classes $S,T\in L_k$, we have $PS\ne PT$.
  $PT=\la PF:F\in T\ra$. So we only need to prove that there exists some $F\in T$ such that $PF\not \in PS$. 
  Actually, we prove that for any reversible gate $F$ over $A$, if $PF\in PS$, then $F\in S$.
  
  $PF\in PS$, so $PF$ can be generated by $\{PG:G\in S\}$.
  Consider how ancilla symbols are used to perform this generation.
  
  If an ancilla symbol is initially $k+1$, then it remains $k+1$ all the time. If this ancilla symbol is used as an input of some gate $PG$, then this $PG$ action has no effect. So we do not need this ancilla symbol . 
  
  We can perform the generation with all ancilla symbols initially in $A$.
  This gives a way to generate $F$ using $S$. 
  \end{proof}
  \begin{rmk}\label{FindLatticeEmbedding}
  We do not see the reason for $P$ to be either a join-lattice embedding or a meet-lattice embedding. 
  It is an interesting question whether we can find some map $Q:L_{k}\to L_{k+1}$ that is a lattice embedding (or even a complete lattice embedding). 
  \end{rmk}
  
  \section{Gate classes containing $\CONS_{k-1,1}$}
  Let $A=\{1,\ldots,k\}$ where $k\ge 3$. 
  By Proposition \ref{SkAction}, there is a natural action of $S_k$ on $L_k$.
  Therefore if we have two partitions $\lm=\{\lm_1,\ldots,\lm_l\}$ and $\mu=\{\mu_1,\ldots,\mu_l\}$ of $A$ such that $|\lm_i|=|\mu_i|$ for all $i$, then the gate classes $\CONS_\lm$ and $\CONS_\mu$ are ``isomorphic'' in a certain sense.
  \begin{defn}
  Let $\tau$ be a partition of $k$, i.e. $\tau=\{\tau_1,\ldots,\tau_l\}$ where $\tau_i\ge 1$ and $\sum \tau_i=k$. Define $\CONS_\tau$ to be any $\CONS_\lm$ where $\lm=\{\lm_1,\ldots,\lm_l\}$ is a partition of $A$ and $|\lm_i|=\tau_i$ for all $i$. 
  $\CONS_\lm$ is well-defined up to $S_k$-action. 
  \end{defn}
  In this section we study the properties of $\CONS_{k-1,1}$.
  We only need to study $\CONS_\lm$ where $\lm=\{\{1,\ldots,k-1\},\{k\}\}$. 
  
  We adapt the methods in \cite{AGS15} and prove the following theorem.
  \begin{thm}\label{CONSInteger}
  The lattice of gate classes containing $\CONS_{k-1,1}$ is anti-isomorphic to the lattice $(\bN,|)$ of non-negative integers, where $n\le m$ in the lattice order relation if and only if $n|m$.
  The gate class corresponding to $0$ is $\CONS_{k-1,1}$, and the gate class corresponding to $m\ge 1$ is $\la \CONS_{k-1,1},\CC_m\ra$, where $\CC_m=1\hyp \tau_{1^m,k^m}$. 
  \end{thm}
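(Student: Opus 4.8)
The plan is to attach to every gate the integer invariant measuring how much it can change the number of $k$'s, and to show this invariant classifies the lattice. For a gate $F\colon A^l\to A^l$ and an input $x$, write $N_k(x)$ for the number of coordinates of $x$ equal to $k$ and set $\De_F(x)=N_k(F(x))-N_k(x)$. For a gate class $S\spe\CONS_{k-1,1}$ let $D(S)=\{\De_F(x):F\in S,\ x\in A^{l},\ l\ge 0\}$. First I would show that $D(S)\cup\{0\}$ is a subgroup of $\bZ$: it is closed under negation because $F^{-1}\in\la F\ra$ (a bijection of a finite set is a power of itself) and $\De_{F^{-1}}(F(x))=-\De_F(x)$, and it is closed under addition because $\De_{F\ot G}(x,y)=\De_F(x)+\De_G(y)$. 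Hence $D(S)=m\bZ$ for a unique $m=m(S)\in\bN$; clearly $m(\CONS_{k-1,1})=0$, and since $\De$ only shrinks under the ancilla and permutation rules and adds under composition and tensor, $m(\la\CONS_{k-1,1},\CC_m\ra)=m$, as $\De_{\CC_m}=m$ on its nontrivial input.

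Define $G_m=\{F:\De_F(x)\equiv 0\pmod m\ \text{for all }x\}$, with $G_0=\CONS_{k-1,1}$. Using additivity of $\De$ again, each $G_m$ is a gate class, and $F\in S\imp\De_F(x)\in D(S)=m\bZ$ gives $S\sse G_{m(S)}$; the easy inclusion $\la\CONS_{k-1,1},\CC_m\ra\sse G_m$ is immediate. The heart of the argument is the reverse inclusion $G_m\sse\la\CONS_{k-1,1},\CC_m\ra$, which I would prove by a connectivity argument in the style of Theorem \ref{AllGenerate}. Group the inputs of length $l$ into level sets $L_j=\{x:N_k(x)=j\}$ and, for $0\le r<m$, super-levels $M_r=\bup_{j\equiv r\ (\mathrm{mod}\ m)}L_j$. \emph{Step one:} $\CONS_{k-1,1}$ realizes the full symmetric group $\mathrm{Sym}(L_j)$ on each level. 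Indeed, controlled transpositions $w\hyp\tau_{a,b}$ with $a,b\ne k$ (built as in Lemma \ref{WControlTrans} inside the sub-alphabet $\{1,\dots,k-1\}$, legitimate since $k\ge 3$) and controlled position swaps $w\hyp\SWAP_{k,a}$ (built as in Lemma \ref{WControlSWAP}, conservative hence in $\CONS_{k-1,1}$) are clean transpositions whose edges connect any two strings of $L_j$, and a connected set of transpositions generates the symmetric group. \emph{Step two:} $\CC_m$, being the clean transposition swapping $1\cdot 1^m\in L_0$ with $1\cdot k^m\in L_m$, when conjugated by the level-symmetric groups from step one yields a clean transposition between some $u\in L_j$ and some $v\in L_{j+m}$ for every level $j$ and length $l$. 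Steps one and two together make each $M_r$ connected, so $\la\CONS_{k-1,1},\CC_m\ra$ realizes $\mathrm{Sym}(M_r)$. Finally any $F\in G_m$ preserves every $M_r$, so $F=\prod_r F|_{M_r}$ with each $F|_{M_r}\in\mathrm{Sym}(M_r)$, whence $F\in\la\CONS_{k-1,1},\CC_m\ra$.

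To finish, I would establish the classification and the order reversal. For surjectivity, take any $S\spe\CONS_{k-1,1}$ and set $m=m(S)$; then $S\sse G_m$ from above, while the same connectivity argument, now using a transition of displacement $m$ supplied by $D(S)=m\bZ$ (padded by ancillas and conjugated by $\CONS_{k-1,1}$ so as to provide a bridge at every level and length) in place of $\CC_m$, gives $G_m\sse S$, so $S=G_m=\la\CONS_{k-1,1},\CC_m\ra$. Injectivity is just $m(G_m)=m$. For the order relation, $n\mid m$ forces $G_m\sse G_n$ (a multiple of $m$ is a multiple of $n$), and conversely $G_m\sse G_n$ gives $\De_{\CC_m}=m\equiv 0\pmod n$, i.e.\ $n\mid m$; thus $m\mapsto G_m$ is an order-reversing bijection out of $(\bN,\mid)$, and an order-reversing bijection of lattices is a lattice anti-isomorphism, with $0$ (the top of $(\bN,\mid)$) sent to the bottom class $\CONS_{k-1,1}$.

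The main obstacle is the generation lemma of the second paragraph: proving that $\CONS_{k-1,1}$ already gives the full symmetric group on each level set (not merely a transitive or index-two subgroup), and that a single displacement-$m$ transition can be turned, using only ancillas and conservative gates, into a \emph{clean} bridging transposition at every level and length while returning all ancilla symbols to their initial values. I expect these to follow from careful adaptations of the controlled-gate constructions in Lemmas \ref{WControlTrans} and \ref{WControlSWAP} together with the symmetric-group-from-connectivity argument already used for Theorem \ref{AllGenerate}, but the bookkeeping of ancillas across levels is where the real work lies.
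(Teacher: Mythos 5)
Your overall architecture coincides with the paper's: the invariant is the generator of the group of achievable changes in the number of $k$'s, the classes containing $\CONS_{k-1,1}$ are exactly the classes $G_m$ of mod-$m$-preserving gates, these equal $\la\CONS_{k-1,1},\CC_m\ra$, and the assignment $m\mto G_m$ is an anti-isomorphism with $(\bN,\mid)$. Your packaging of the invariant as the subgroup $D(S)\cup\{0\}\le\bZ$ is actually tidier than the paper's route (its Definition \ref{ModPreserverM} together with Lemmas \ref{TensorModPreserver} and \ref{TensorW} accomplish the same thing with gcd manipulations), and it cleanly produces a gate in $S$ with a single transition of displacement exactly $m$.

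However, the heart of your argument --- Step two, and its surrogate in the surjectivity paragraph --- has a genuine gap. Padding $\CC_m$ by the identity to arity $l$ does not give a transposition: it gives the bulk involution $C=\prod_{w\in A^{l-m-1}}\tau_{11^mw,\,1k^mw}$, and conjugating $C$ by level-preserving permutations yields another bulk involution, never a single clean transposition. (One can extract single bridging transpositions the other way around, conjugating a within-level transposition $\sigma$ by $C$ when $\sigma$ moves exactly one point of $\supp(C)$, but this only bridges $L_j\lrw L_{j+m}$ for $j\le l-m-1$.) Worse, the missing bridge $L_{l-m}\lrw L_l$ cannot be produced at all without ancillas: every embedded copy of $\CC_m$ fixes $k^l$ because its control symbol must equal $1$, and $\mathrm{Sym}(L_l)$ is trivial, so every ancilla-free arity-$l$ circuit over $\CONS_{k-1,1}\cup\{\CC_m\}$ fixes $k^l$, while $G_m$ contains gates moving $k^l$. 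Hence ``each $M_r$ is connected'' is false as you set it up; ancillas are not bookkeeping here but the essential ingredient. The paper's Proposition \ref{ModPreserverFromCCm} resolves exactly this point with an ancilla $z$ initialized to $2$: toggle $z$ between $1$ and $2$ by $u_{m+1}\ldots u_n\hyp\tau_{1,2}$ (conservative precisely because $k\ge 3$), apply $\CC_m$ on $zx_1\ldots x_m$, and undo. The same difficulty is more severe in your surjectivity step: there you only have an arbitrary $F\in S$ with one known transition of displacement $m$, and $F$'s action on all other inputs is uncontrolled, so no amount of conjugation by $\CONS_{k-1,1}$ localizes it to a clean transposition; the paper needs the involution trick ($F\ot F^{-1}$ composed with a swap of the two blocks) plus a conditional swap of the live register into a scratch register preloaded with a fixed string of that involution (Proposition \ref{CCmFromModPreserver}). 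Note also that these gadgets are where $k\ge3$ enters; since the theorem is false for $k=2$, any proof in which that hypothesis is invisible --- as it is in your sketch --- must be incomplete. These two constructions are the real content of the theorem, and the proposal does not supply them.
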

  The Hasse diagram of the lattice of gate classes containing $\CONS_{k-1,1}$ is shown in Figure \ref{FigCONSInteger}.
  \begin{rmk}
  Theorem \ref{CONSInteger} fails when $k=2$ because there are parity-flipping gates in that case.
  In particular, Lemma \ref{NoModShifter} fails when $k=2$.
  \end{rmk}
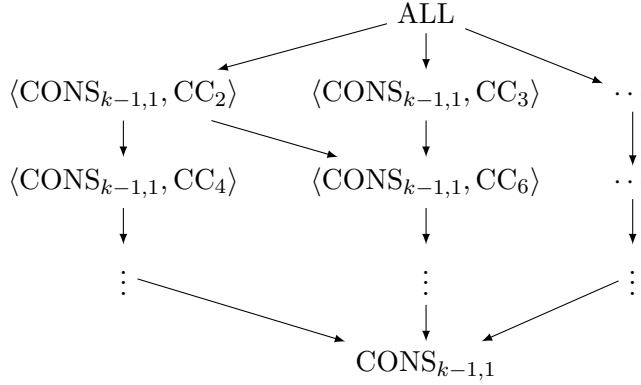
\begin{figure}
\begin{center}
\begin{tikzpicture}[>=latex]
\matrix[row sep=0.5cm,column sep=0.7cm] {
& \node (ALL) {$\ALL$}; & \\
\node (MOD2) {$\la \CONS_{k-1,1},\CC_2\ra$}; & \node (MOD3) {$\la \CONS_{k-1,1},\CC_3\ra$}; & \node (MOD5) {$\cdots$}; \\
\node (MOD4) {$\la \CONS_{k-1,1},\CC_4\ra$}; & \node (MOD6) {$\la \CONS_{k-1,1},\CC_6\ra$};& \node (MOD10) {$\cdots$}; \\
\node (MOD8) {$\vdots$}; & \node (MOD18) {$\vdots$};& \node (MOD20) {$\vdots$}; \\
& \node (CONS) {$\CONS_{k-1,1}$}; &\\
};
\path[draw,->] 
(ALL) edge (MOD2)
(ALL) edge (MOD3)
(ALL) edge (MOD5)
(MOD2) edge (MOD4)
(MOD2) edge (MOD6)
(MOD3) edge (MOD6)
(MOD5) edge (MOD10)
(MOD4) edge (MOD8)
(MOD6) edge (MOD18)
(MOD10) edge (MOD20)
(MOD8) edge (CONS)
(MOD18) edge (CONS)
(MOD20) edge (CONS)
;
\end{tikzpicture}
\par
\end{center}
\caption{The lattice of gate classes containing $\CONS_{k-1,1}$}
\label{FigCONSInteger}
\end{figure}

  The theorem is proved in several steps.
  \begin{notn}
  For any finite string $w$ of elements in $A$, define $c_k(w)$ to be the number of $k$'s in the string.
  \end{notn}
  \begin{defn}
  A gate $F:A^n\to A^n$ is called \textbf{mod-$m$-respecting} if for every $w\in A^n$, $c_k(F(w))=c_k(w)+j \pmod m$ for some $0\le j\le m-1$. When $j=0$, $F$ is called \textbf{mod-$m$-preserving}. 
  \end{defn}
  \begin{lemma}[Adaptation of \cite{AGS15}, Theorem 12]\label{NoModShifter}
  If $F$ is mod-$m$-respecting, then it is mod-$m$-preserving.
  \end{lemma}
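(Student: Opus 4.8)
The plan is to prove the statement in its sharp form: rather than arguing about the congruence class directly, I will show that the uniform shift $j$ appearing in the definition of mod-$m$-respecting is forced to be $0$, which is exactly mod-$m$-preserving. First I would dispose of the trivial case $m=1$, in which every residue is $\equiv 0$ and there is nothing to prove. So assume $m\ge 2$ and let $F:A^n\to A^n$ be mod-$m$-respecting with shift $j$, i.e. $c_k(F(w))\equiv c_k(w)+j\pmod m$ for every $w\in A^n$, with $0\le j\le m-1$ fixed.

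The main device is a character sum. Let $\omega=e^{2\pi i/m}$ be a primitive $m$-th root of unity, and weigh each string $w$ by $\omega^{c_k(w)}$. I would then evaluate $\Sigma=\sum_{w\in A^n}\omega^{c_k(w)}$ in two ways. On one hand, because $F$ is a bijection of $A^n$, reindexing the sum along $v=F(w)$ leaves it unchanged, so $\Sigma=\sum_w \omega^{c_k(F(w))}$; substituting the respecting relation turns this into $\omega^{j}\Sigma$. On the other hand, since $c_k$ merely counts occurrences of the symbol $k$, the weight factors over the $n$ coordinates, each coordinate contributing $(k-1)\cdot 1+1\cdot\omega$, whence $\Sigma=(k-1+\omega)^n$.

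The hard part — and the only step that uses the hypothesis $k\ge 3$ — is to verify that $\Sigma\neq 0$. Here I would argue by absolute values: since $|\omega|=1$ while $k-1\ge 2$, the number $k-1+\omega$ cannot vanish, for $\omega=-(k-1)$ would force $|\omega|=k-1\ge 2$, a contradiction. Hence $\Sigma\neq 0$, and comparing the two evaluations gives $\omega^{j}=1$. As $\omega$ is a primitive $m$-th root of unity and $0\le j\le m-1$, this yields $j=0$, so $F$ is mod-$m$-preserving.

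I expect the nonvanishing of $\Sigma$ to be the crux, and it is precisely what distinguishes the non-binary case from the binary one: when $k=2$ and $m=2$ one has $\omega=-1$ and $\Sigma=(1+(-1))^n=0$ for $n\ge 1$, so the argument collapses. This is consistent with the Remark above, where parity-flipping gates obstruct the statement at $k=2$.
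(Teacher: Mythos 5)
Your proof is correct and is essentially the paper's argument: both compare the two evaluations of the weighted sum $\sum_{w\in A^n} z^{c_k(w)}$ (bijectivity of $F$ on one hand, the product formula $(z+k-1)^n$ on the other) and then use $k\ge 3$ to conclude the shift $j$ vanishes. The only difference is cosmetic — the paper works with a formal variable modulo $z^m-1$ and invokes coprimality of $z+k-1$ with $z^m-1$, while you evaluate at a concrete primitive $m$-th root of unity $\omega$ and rule out $\omega=-(k-1)$ by absolute values, which is exactly the same use of the hypothesis.
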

  \begin{proof}
  Let $z$ be a variable. 
  $c_k(F(w))=c_k(w)+j\pmod k$ means $z^{c_k(F(w))}=z^{c_k(w)+j} \pmod{(z^m-1)}$. 
  We sum over all $w\in A^n$ and get
  \begin{align*}
  \sum_{w\in A^n} z^{c_k(F(w))}=z^j \sum_{w\in A^n} z^{c_k(w)} \pmod {(z^m-1)}.
  \end{align*}
  On the other hand, we have 
  \begin{align*}
  \sum_{w\in A^n} z^{c_k(F(w))}=\sum_{w\in A^n} z^{c_k(w)}=\sum_{0\le i\le n} \binom ni(k-1)^{n-i}z^i=(z+k-1)^n
  \end{align*}
  From the above two equalities, we see $(z+k-1)^n(z^j-1)=0\pmod {(z^m-1)}$. 
  $k\ge 3$ so $z+k-1$ is coprime with $z^m-1$. Therefore $z^j-1=0\pmod{(z^m-1)}$.
  This is true only when $j=0$. 
  \end{proof}
  \begin{defn}\label{ModPreserverM}
  For a gate $F$, define $m(F)$ to be the largest $m$ such that $F$ is a mod-$m$-preserving. If there is no such largest $m$, define $m(F)=0$.
  \end{defn}
  \begin{lemma}[Adaptation of \cite{AGS15}, Proposition 2]\label{TensorModPreserver}
  For two gates $F$, $G$, we have $m(F\ot G)=\gcd(m(F),m(G))$.
  \end{lemma}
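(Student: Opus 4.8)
The plan is to reduce the statement $m(F \ot G) = \gcd(m(F), m(G))$ to two inequalities (in the divisibility order), using the fact that being mod-$m$-preserving is, by Lemma \ref{NoModShifter}, equivalent to being mod-$m$-respecting. First I would record the basic divisibility structure of the set of moduli: if $F$ is mod-$m$-preserving then $F$ is mod-$d$-preserving for every $d \mid m$, since $c_k(F(w)) = c_k(w) \pmod m$ implies $c_k(F(w)) = c_k(w) \pmod d$. Hence the set of $m$ for which $F$ is mod-$m$-preserving is exactly the set of divisors of $m(F)$ (with the convention that $m(F) = 0$ encodes ``mod-$m$-preserving for all $m$'', i.e. $c_k$ is exactly preserved on every input); this makes ``$m(F)$ is the largest such $m$'' the same as ``$m$ is mod-$m$-preserving iff $m \mid m(F)$''.

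Next I would establish the easy direction, that $\gcd(m(F), m(G)) \mid m(F \ot G)$, by showing directly that $F \ot G$ is mod-$\gcd(m(F),m(G))$-preserving. Write $d = \gcd(m(F), m(G))$. For an input $(u, v) \in A^{k} \times A^{l}$ to $F \ot G$, the count splits additively: $c_k((F \ot G)(u,v)) = c_k(F(u)) + c_k(G(v))$ and likewise $c_k(u,v) = c_k(u) + c_k(v)$. Since $d \mid m(F)$ and $d \mid m(G)$, both $F$ and $G$ are mod-$d$-preserving, so $c_k(F(u)) \equiv c_k(u)$ and $c_k(G(v)) \equiv c_k(v)$ modulo $d$; adding gives $c_k((F\ot G)(u,v)) \equiv c_k(u,v) \pmod d$, as desired.

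The main obstacle is the reverse direction, $m(F \ot G) \mid \gcd(m(F), m(G))$, which I would prove by showing that $m(F \ot G)$ divides each of $m(F)$ and $m(G)$ separately. The key idea is that $F$ itself can be recovered from $F \ot G$ by the ancilla rule: fixing the $G$-block inputs to a fixed point of $G$ (each reversible gate has at least one fixed point, namely any input on which it acts trivially — or more carefully, I would instead argue at the level of counts). Concretely, let $m = m(F \ot G)$. To see $F$ is mod-$m$-preserving, pick any $u \in A^k$ and compare the two inputs $(u, v)$ and $(u', v)$ for a common second block $v$: since $c_k((F \ot G)(u,v)) - c_k(u,v) \equiv 0 \pmod m$ and the $G$-contribution $c_k(G(v)) - c_k(v)$ is independent of the first block, subtracting shows $\bigl(c_k(F(u)) - c_k(u)\bigr)$ is well-defined modulo $m$ independent of $v$; then taking $u' $ to range and using that some input value of $F$ achieves each needed count shows $c_k(F(u)) \equiv c_k(u) \pmod m$ for all $u$, so $F$ is mod-$m$-respecting, hence mod-$m$-preserving by Lemma \ref{NoModShifter}. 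Thus $m \mid m(F)$, and symmetrically $m \mid m(G)$, giving $m \mid \gcd(m(F), m(G))$. Combining the two inequalities yields the claimed equality, with the boundary cases $m(F) = 0$ or $m(G) = 0$ handled by the convention that $0$ is the divisibility-largest element (so $\gcd(0, n) = n$ and $\gcd(0,0) = 0$), which matches $F \ot G$ being exactly $c_k$-preserving precisely when both factors are.
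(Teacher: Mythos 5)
Your proposal is, at its core, the same argument as the paper's: both directions rest on Lemma \ref{NoModShifter}, and your direct derivation (if $F\ot G$ is mod-$m$-preserving, then the shift $c_k(F(u))-c_k(u)$ is forced to be constant in $u$, so $F$ is mod-$m$-respecting, hence mod-$m$-preserving) is just the contrapositive of the paper's contradiction argument (if $F$ is not mod-$m^\p$-preserving, it is not mod-$m^\p$-respecting, so two inputs with different shifts, appended to a common $w$, contradict $F\ot G$ being mod-$m^\p$-preserving). The easy direction $\gcd(m(F),m(G))\mid m(F\ot G)$ is handled identically.

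However, one step is under-justified, and it is precisely the step the paper covers with its lcm remark. You claim that downward closure (mod-$m$-preserving implies mod-$d$-preserving for every $d\mid m$) already yields ``$F$ is mod-$m$-preserving iff $m\mid m(F)$.'' It does not: a downward-closed set of moduli with maximum $m(F)$ can contain elements not dividing $m(F)$ (for instance $\{1,2,3\}$ is downward closed with maximum $3$ but contains $2$). What you need in addition is closure under lcm: if $c_k(F(w))\equiv c_k(w)$ both mod $m_1$ and mod $m_2$ for all $w$, then also mod $\lcm(m_1,m_2)$; hence a good modulus $m$ with $m\nmid m(F)$ would produce the good modulus $\lcm(m,m(F))>m(F)$, contradicting maximality. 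This is exactly the paper's line ``if $F$ is mod-$m^\p$-preserving, then it is mod-$\lcm(m^\p,m(F))$-preserving,'' and your reverse inclusion genuinely depends on it: from ``$F$ is mod-$m(F\ot G)$-preserving'' alone you only get $m(F\ot G)\le m(F)$, not $m(F\ot G)\mid m(F)$, and the statement of the lemma is about gcd, i.e.\ divisibility. Separately, your parenthetical claim that every reversible gate has a fixed point is false (a cyclic permutation of $A$ acting on one symbol has none), but since you abandon that route and argue at the level of counts, it causes no damage to the proof.
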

  \begin{proof}
  Clearly $\gcd(m(F),m(G))|m(F\ot G)$. 
  Assume $m(F\ot G)=m^\p\ne \gcd(m(F),m(G))$. 
  Then either $m^\p\nmid m(F)$ or $m^\p\nmid m(G)$. 
  Assume without loss of generality $m^\p\nmid m(F)$. 
  If $F$ is mod-$m^\p$-preserving, then it is mod-$\lcm(m^\p,m(F))$-preserving, and there is contradiction.
  So $F$ is not mod-$m^\p$-preserving. 
  By Lemma \ref{NoModShifter}, $F$ is not mod-$m^\p$-respecting.
  So we can find two inputs $u$, $v$ of $F$ such that $c_k(Fu)-c_k(u)\ne c_k(Fv)-c_k(v)\pmod {m^\p}$.
  Then for any input $w$ of $G$, $c_k((F\ot G)uw)-c_k(uw)\ne c_k((F\ot G)vw)-c_k(vw)\pmod {m^\p}$, which means that $F\ot G$ is not mod-$m^\p$-preserving. Contradiction. 
  So $m(F\ot G)=\gcd(m(F),m(G))$.
  \end{proof}
  \begin{prop}[Adaptation of \cite{AGS15}, Theorem 27]\label{ModPreserverFromCCm}
  For any $m\ge 1$, $\CONS_{k-1,1}$ together with $\CC_m$ generates the class of all mod-$m$-preserving gates.
  \end{prop}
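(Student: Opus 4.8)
The plan is to prove the nontrivial inclusion by reducing to the generation of transpositions, describing the group of mod-$m$-preserving gates on each $A^n$ explicitly, and then assembling that group from two ingredients: the transpositions internal to a fixed value of $c_k$, which $\CONS_{k-1,1}$ supplies for free, together with a single family of ``cross-level'' transpositions that change $c_k$ by exactly $m$, which I extract from $\CC_m$. Before that I would record the easy inclusion $\la \CONS_{k-1,1},\CC_m\ra\sse\{\text{mod-}m\text{-preserving gates}\}$: every gate in $\CONS_{k-1,1}$ preserves $c_k$ exactly and hence mod $m$, while $\CC_m$ changes $c_k$ only by $0$ or $\pm m$; since the mod-$m$-preserving gates are closed under all four generation rules, both generators and therefore the whole generated class lie inside this class.

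For the reverse inclusion, fix $n$ and partition $A^n$ into the level sets $L_r=\{w\in A^n: c_k(w)\equiv r \pmod m\}$ for $0\le r<m$. A mod-$m$-preserving bijection maps each $L_r$ onto itself, so the group of mod-$m$-preserving gates on $A^n$ is exactly $\prod_{r}\mathrm{Sym}(L_r)$. As in the proofs of Theorem \ref{AllGenerate} and Theorem \ref{ConsGen}, it suffices to generate every transposition $\tau_{u,v}$ with $c_k(u)\equiv c_k(v)\pmod m$, and for that it is enough to show that the graph on $A^n$ whose edges are the transpositions we can generate is connected on each $L_r$. Two kinds of edges are available: from $\CONS_{k-1,1}$ we obtain $\tau_{u,v}$ for all $u,v$ with $c_k(u)=c_k(v)$, which makes each set $\{c_k=j\}$ a complete graph. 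Thus the only thing left is to produce, for each $j$ with $0\le j\le n-m$, at least one edge joining $\{c_k=j\}$ to $\{c_k=j+m\}$.

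To build these cross-level edges I would start from $\CC_m=1\hyp\tau_{1^m,k^m}$: fixing its control input to $1$ and applying the ancilla rule yields $\tau_{1^m,k^m}$ as a gate on $A^m$. Tensoring with the identity produces $G=\tau_{1^m,k^m}\ot \id_{A^{n-m}}\in\la\CONS_{k-1,1},\CC_m\ra$, which acts on $A^n$ as the product of disjoint transpositions $\prod_{y\in A^{n-m}}\tau_{1^my,\,k^my}$, the factors with $c_k(y)=j$ joining level $j$ to level $j+m$. The main difficulty is that $G$ is a product of many \emph{parallel} transpositions, not a single one, so I must isolate one cross-level transposition. I would do this group-theoretically: conjugate $G$ by a same-level transposition $\sigma=\tau_{1^my_0,\,w}\in\CONS_{k-1,1}$, where $w$ is any string with $c_k(w)=j$ that does not occur among the $1^my$ (such $w$ exists because for $k\ge 3$ the set $\{c_k=j\}$ is strictly larger than $\{1^my:c_k(y)=j\}$). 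Then $G\cdot\sigma G\sigma^{-1}$ has all factors at the other levels cancel in pairs and reduces to a single $3$-cycle across levels $j$ and $j+m$; composing this with $\sigma$ once more yields the single cross-level transposition $\tau_{1^my_0,\,k^my_0}$. As the gate class is closed under composition, this transposition lies in $\la\CONS_{k-1,1},\CC_m\ra$, supplying the missing edge for every $j$.

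With complete graphs on each $\{c_k=j\}$ and one cross-edge between each consecutive pair $\{c_k=j\}$, $\{c_k=j+m\}$, the graph is connected on every $L_r$, so the generated transpositions generate $\prod_r\mathrm{Sym}(L_r)$, i.e.\ all mod-$m$-preserving gates on $A^n$. The remaining cases $n<m$ are vacuous, since there mod-$m$-preservation coincides with $c_k$-preservation and $\CONS_{k-1,1}$ already suffices (and for $n=m$ the gate $G$ is itself a single transposition, so no isolation is needed). I expect the isolation of a single transposition from the parallel product $G$ to be the main obstacle; the reduction to transpositions and the final connectivity argument are routine adaptations of the reasoning used earlier in the paper.
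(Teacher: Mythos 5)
Your isolation step contains a genuine, though easily repaired, gap. For your claim that $G\cdot\sigma G\sigma^{-1}$ is a $3$-cycle to be valid, the string $w$ in $\sigma=\tau_{1^my_0,\,w}$ must be a \emph{fixed point} of $G=\tau_{1^m,k^m}\ot\id_{A^{n-m}}$, i.e.\ $w$ must lie outside \emph{both} families $\{1^my:y\in A^{n-m}\}$ and $\{k^my:y\in A^{n-m}\}$; you only exclude the first, and your counting argument for the existence of $w$ likewise only compares against the first. This matters exactly when $j\ge m$: a string $w$ with $c_k(w)=j$ may equal $k^my'$ with $c_k(y')=j-m$, and such a $w$ is a legal choice under your stated condition. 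In that case $G$ moves $w$, the factors do not cancel as described, and a direct computation gives $G\cdot\sigma G\sigma^{-1}=\tau_{1^my_0,\,k^my'}\cdot\tau_{1^my',\,k^my_0}$, a product of two disjoint transpositions rather than a $3$-cycle; composing with $\sigma$ then yields $\tau_{1^my',\,k^my_0}$, an edge between levels $j-m$ and $j+m$ instead of the claimed $\tau_{1^my_0,\,k^my_0}$, so the connectivity argument as you stated it breaks. The repair is immediate: since $k\ge 3$ and $j\le n-m\le n-1$, take $w$ to have first symbol $2$ and exactly $j$ occurrences of $k$ among its remaining $n-1$ positions; such a $w$ lies in neither family, and then your conjugation computation and the rest of the argument go through.

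With that fix, your proof is correct and takes a genuinely different route from the paper's at the crucial step. Both proofs reduce to generating transpositions $\tau_{u,v}$ with $c_k(u)\equiv c_k(v)\pmod m$, and both use $\CONS_{k-1,1}$ to handle everything within an exact level, so that only one cross-level transposition per pair of levels $(j,j+m)$ is needed. The paper produces it by an explicit circuit: it first normalizes $u,v$ (conjugating by transpositions in $\CONS_{k-1,1}$) to the form $u=1^ms$, $v=k^ms$ with a common suffix $s$, and then implements $\tau_{1^ms,k^ms}$ using one ancilla $z$ initialized to $2$, the controlled flip $s\hyp\tau_{1,2}$ (which lies in $\CONS_{k-1,1}$ because $1,2\ne k$), and a single application of $\CC_m$ controlled on $z$. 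You instead extract $\tau_{1^m,k^m}$ from $\CC_m$ by the ancilla rule, tensor with the identity, and isolate one factor of the resulting parallel involution by a commutator-style computation in the symmetric group. Your route avoids designing the ancilla circuit but pays for it with the group-theoretic bookkeeping above; the paper's route avoids that bookkeeping entirely because its target transposition is produced directly rather than cut out of a product of parallel transpositions.
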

  \begin{proof}
  Similar to the proof of Theorem \ref{AllGenerate}, we only need to show that $\CONS_{k-1,1}+\CC_m$ generates transpositions $\tau_{u,v}$ where $u,v\in A^n$ for some $n$ and $c_k(u)=c_k(v)\pmod m$.
  
  If $c_k(u)=c_k(v)$, then $\tau_{u,v}\in \CONS_{k-1,1}$. 
  So we can assume $c_k(u)\ne c_k(v)$. Assume without loss of generality $c_k(u)<c_k(v)$. 
  Then we can find a sequence $u=w_0,w_1,\ldots,w_p=v$ such that $c_k(w_i)=c_k(w_{i+1})-m$.
  So we can assume $c_k(v)-c_k(u)=m$.
  
  Choose a string $w\in A^n$ such that $w$ is a permutation of $v$, and the set of positions $j$ where $w_j=k$ contains the set of positions $j$ where $u_j=k$.
  $\tau_{v,w}\in \CONS_{k-1,1}$, so we can replace $v$ with $w$.
  
  Therefore we can assume the set of positions $j$ where $v_j=k$ contains the set of positions $j$ where $u_j=k$.
  Assume without loss of generality that for $1\le j\le m$, $v_j=k$ but $u_j\ne k$.
  Let $w$ be the string whose first $m$ symbols are $k$, and the last $n-m$ symbols are the same as $u$. 
  Then $\tau_{w,v}\in \CONS_{k-1,1}$. So we can replace $v$ with $w$ and assume that the last $n-m$ symbols of $u$ are the same as that of $v$. 
  
  Let $w$ be the string whose whose first $m$ symbols are $1$, and the last $n-m$ symbols are the same as $u$. 
  Then $\tau_{w,u}\in \CONS_{k-1,1}$. So we can replace $u$ with $w$ and assume that the first $m$ symbols of $u$ are all $1$.
  
  Now we generate $\tau_{u,v}$.
  Let $x_1,\ldots,x_n$ be inputs.
  Introduce an ancilla symbol $z$, initialized to be $2$. 
  The following sequence of operations implements $\tau_{u,v}$ and fixes all ancilla symbols.
  \begin{enumerate}
  \item Apply $u_{m+1}\ldots u_n\hyp \tau_{1,2}$ on $x_{m+1}\ldots x_nz$.
  \item Apply $\CC_m$ on $zx_1\ldots x_m$. 
  \item Apply $u_{m+1}\ldots u_n\hyp \tau_{1,2}$ on $x_{m+1}\ldots x_nz$.
  \end{enumerate}
  \end{proof}
  \begin{lemma}[\cite{AGS15}, Proposition 40]\label{TensorW}
  Let $F:A^n\to A^n$ be a gate with $m(F)\ge 1$. Then there exists $t\ge 1$ and string $w\in A^{nt}$ such that $c_k(F^{\ot t}(w))-c_k(w)=m$.
  \end{lemma}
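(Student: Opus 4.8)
Throughout I write $m=m(F)$. The plan is to reduce the statement to a purely arithmetic fact about the per-block change of $c_k$. For a string $w\in A^{nt}$ written as a concatenation of $t$ blocks $w=w^{(1)}\cdots w^{(t)}$ with each $w^{(i)}\in A^n$, the gate $F^{\ot t}$ acts blockwise, so
\begin{align*}
c_k(F^{\ot t}(w))-c_k(w)=\sum_{i=1}^t\big(c_k(F(w^{(i)}))-c_k(w^{(i)})\big).
\end{align*}
Writing $\de(u)=c_k(F(u))-c_k(u)$ for $u\in A^n$, the quantities realizable as $c_k(F^{\ot t}(w))-c_k(w)$ over all $t\ge 1$ and all $w$ are exactly the nonnegative integer combinations $\sum_{u\in A^n} n_u\,\de(u)$ with $n_u\ge 0$ and $\sum_u n_u=t\ge 1$. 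So it suffices to realize $m$ as such a combination.

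Next I would identify $m$ with a gcd. Since $F$ is mod-$m'$-preserving exactly when $m'\mid \de(u)$ for every $u\in A^n$, the largest such $m'$ is $g:=\gcd_{u}\de(u)$, the positive generator of the ideal $\la \de(u):u\in A^n\ra\sse \bZ$; thus $m=g$ (the hypothesis $m(F)\ge 1$ guarantees the $\de(u)$ are not all zero). By B\'ezout there are integers $c_u$ with $\sum_u c_u\,\de(u)=m$, though some $c_u$ may be negative, so this is not yet an admissible choice of block multiplicities.

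The crucial step is to exploit that $F$ is a bijection. Because $F$ permutes $A^n$, summing $\de$ over all blocks gives
\begin{align*}
\sum_{u\in A^n}\de(u)=\sum_{u\in A^n}c_k(F(u))-\sum_{u\in A^n}c_k(u)=0.
\end{align*}
Hence adding a common constant $N$ to every coefficient leaves the value unchanged: $\sum_u (c_u+N)\,\de(u)=m+N\cdot 0=m$. Choosing $N$ large enough that $c_u+N\ge 0$ for all $u$ and setting $n_u=c_u+N$ yields nonnegative multiplicities with $\sum_u n_u\,\de(u)=m$; since $m\ge 1>0$ the multiplicities are not all zero, so $t=\sum_u n_u\ge 1$. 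Taking $w$ to be the concatenation of $n_u$ copies of each $u\in A^n$ then gives $c_k(F^{\ot t}(w))-c_k(w)=m$, as required.

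The only genuinely substantive point is the zero-sum identity $\sum_u \de(u)=0$ coming from bijectivity; it is exactly what converts the B\'ezout representation (with possibly negative coefficients) into a representation by honest block multiplicities. Everything else is routine bookkeeping, so I do not expect a serious obstacle beyond getting this translation right.
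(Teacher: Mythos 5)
Your proof is correct and follows essentially the same route as the paper: identify $m(F)$ with $\gcd_{u\in A^n}\bigl(c_k(F(u))-c_k(u)\bigr)$, then use bijectivity of $F$ (the per-block changes sum to zero, hence take both signs) to realize $m$ as a nonnegative integer combination of these changes, i.e.\ as $c_k(F^{\ot t}(w))-c_k(w)$ for a concatenated string $w$. The paper compresses this into the assertion that positive and negative changes both occur and that a suitable sum exists; your B\'ezout-plus-constant-shift argument is simply a more explicit write-up of that same step.
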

  \begin{proof}
  Let $p=\gcd_{w\in A^n} (c_k(F(w))-c_k(w))$. By definition $F$ is mod-$p$-respecting. So $F$ is mod-$p$-preserving, i.e. $p|m(F)$, by Lemma \ref{NoModShifter}.
  There are both positive and negative elements in $\{c_k(F(w))-c_k(w):w\in A^n\}$. So we can find $w_1,\ldots,w_t\in A^n$ such that $\sum_{1\le i\le t} (c_k(F(w_i))-c_k(w_i))=m$. Then $w_1\ldots w_t\in A^{nt}$ is the desired string.
  \end{proof}
  \begin{prop}[Adaptation of \cite{AGS15}, Theorem 41]\label{CCmFromModPreserver}
  For any $m\ge 1$, $\CONS_{k-1,1}$ together with any gate $F$ with $m(F)=m$ generates $\CC_m$.
  \end{prop}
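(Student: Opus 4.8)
The plan is to reduce the statement to manufacturing one very special transposition and then reshape that transposition into $\CC_m$ by $c_k$-preserving maneuvers. First I would replace $F$ by $G=F^{\ot t}$: by Lemma \ref{TensorModPreserver} we have $m(G)=m$ for every $t\ge 1$, and by Lemma \ref{TensorW} we may choose $t$ and a string $w$ with $c_k(G(w))-c_k(w)=m$; write $w'=G(w)$. Since $G\in\la F\ra$, it suffices to show $\CC_m\in\la\CONS_{k-1,1},G\ra$. It is worth recording that, by Lemma \ref{NoModShifter}, every difference $c_k(G(x))-c_k(x)$ is a multiple of $m$, and their greatest common divisor is exactly $m$: if $p$ denotes this gcd, then $G$ is mod-$p$-respecting, hence mod-$p$-preserving by Lemma \ref{NoModShifter}, so $p\mid m(G)=m$, while $m\mid p$ because $m$ divides each difference; thus $p=m$.

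I claim it is then enough to produce, inside $\la\CONS_{k-1,1},G\ra$, a single transposition $\tau_{p,q}$ with $c_k(q)-c_k(p)=m$. Indeed, choose a $c_k$-preserving bijection $P$ (so $P\in\CONS_{k-1,1}$) with $P(p)=1^m s$ and $P(q)=k^m s$ for a fixed string $s$ with $c_k(s)=c_k(p)$; conjugation gives $P\tau_{p,q}P^{-1}=\tau_{1^m s,\,k^m s}$, and fixing the suffix $s$ by the ancilla rule yields $\tau_{1^m,k^m}$ on $A^m$. Finally I would add the control: introduce one ancilla initialized to a fixed point of $\tau_{1^m,k^m}$, such as $21^{m-1}$ (a fixed point since $k\ge 3$), swap it with the target conditioned on the control symbol being $\neq 1$, apply $\tau_{1^m,k^m}$, and swap back. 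The two conditional swaps merely permute symbols and so lie in $\CONS_{k-1,1}$, and the net effect is exactly $\CC_m=1\hyp\tau_{1^m,k^m}$.

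Thus everything comes down to building one $c_k$-jump-$m$ transposition from $G$ and $\CONS_{k-1,1}$. At a fixed arity $N$ the gates of $\CONS_{k-1,1}$ are precisely all $c_k$-preserving bijections of $A^N$, so they contain the full symmetric group $\mathrm{Sym}(L_j)$ of each level $L_j=\{x\in A^N:c_k(x)=j\}$; in particular they supply transpositions within a single level, i.e.\ odd permutations. Let $H$ be the group generated by all of these together with $G$, and let $O$ be the $H$-orbit of $w$. Since $\mathrm{Sym}(L_c)\le H$ for $c=c_k(w)$ and $w'=G(w)\in O$, the orbit $O$ contains $L_c\cup L_{c+m}$, so $H|_O$ is transitive and contains a transposition. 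By Jordan's theorem a primitive group containing a transposition is the full symmetric group, so once $H|_O$ is primitive we get $H|_O=\mathrm{Sym}(O)$, whence the required transposition $\tau_{w,w'}$ lies in $H=\la\CONS_{k-1,1},G\ra$.

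The hard part will be establishing primitivity of $H|_O$, and this is the step I expect to require the most care. The one mechanism that can defeat it is $G$ carrying whole levels onto whole levels, which would make the level partition an $H$-invariant block system and force equalities $|L_j|=|L_{j'}|$ among paired levels. I would remove this by enlarging the arity: tensoring $G$ with the identity on extra ancilla symbols does not change $\la\CONS_{k-1,1},G\ra$ for generation purposes, but it replaces the level sizes by their convolutions with the level sizes of the ancilla block, and one can arrange these to be pairwise distinct. With all level sizes distinct, $G$ cannot permute levels, the level partition is no longer invariant, and I would argue that $H|_O$ is primitive, completing the construction. Proving that a suitable arity always exists — equivalently, ruling out every imprimitive block system and not merely the level partition — is the technical heart of the argument.
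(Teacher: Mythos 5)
Your reduction is sound in its outer layers: producing a single transposition $\tau_{p,q}$ with $c_k(q)-c_k(p)=m$ does suffice, since conjugating by a $c_k$-preserving bijection, deleting the suffix by the ancilla rule, and your park-and-swap control gadget (whose conditional swaps are conservative, hence in $\CONS_{k-1,1}$) correctly assemble $\CC_m$ from $\tau_{1^m,k^m}$. The genuine gap is exactly where you flag it, and it is not a deferred technicality --- it is the entire content of the proposition. At a fixed arity $N$ the primitivity claim can simply be \emph{false}: if $G$ happens to map the level $L_c$ bijectively onto $L_{c+m}$, map $L_{c+m}$ back onto $L_c$, and fix all other levels, then $O=L_c\cup L_{c+m}$ and $\{L_c,L_{c+m}\}$ is an $H$-invariant block system; in that case no jump-$m$ transposition lies in $H$ at all (such a transposition moves one point of $L_c$ across the blocks while fixing the rest of $L_c$, so it cannot preserve the system). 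Thus arity enlargement is unavoidable, and your proposed criterion --- make individual level sizes pairwise distinct --- does not close the argument, because as you note blocks need only be unions of levels of equal \emph{total} size, which distinctness of the summands does not preclude. There is also a secondary gap: even granting $H|_O=\mathrm{Sym}(O)$, an element of $H$ restricting to $(w\ w')$ on $O$ may act nontrivially off $O$, so $\tau_{w,w'}\in H$ does not follow immediately; this one is repairable (commutate with a transposition from $\mathrm{Sym}(L_c)$ to get a clean $3$-cycle supported in $O$, then compose with a level-preserving transposition), but it is a further unproved step in the chain.

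It is worth seeing how the paper sidesteps the group theory entirely, because the contrast explains why your hard step is hard. The paper never tries to manufacture a clean transposition at the base arity. Instead it forms the involution $G=\mathrm{swap}\circ(F\ot F^{-1})$, so that Lemma \ref{TensorW} yields strings with $G^{\ot t}(u)=v$, $G^{\ot t}(v)=u$, and an explicit fixed point $w$ of the form $c_1F(c_1)\cdots c_tF(c_t)$; it then conjugates by a gate $H\in\CONS_{k-1,1}$ to normalize $u,v$ and uses ancilla registers with controlled swaps so that the resulting gate $R=H\circ G^{\ot t}\circ H^{-1}$ is only ever applied to a register known to contain one of the three strings $u'$, $v'$, $w'$, on which its behavior is known ($u'\lrw v'$, $w'$ fixed). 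The unknown action of $R$ elsewhere is never exercised, so no primitivity, orbit, or block analysis is needed. If you want to salvage your route, the honest comparison is: your approach would prove a stronger, purely group-theoretic statement at a single arity, which the counter-scenario above shows is unobtainable in general; the ancilla mechanism is precisely the extra power that makes the proposition true.
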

  \begin{proof}
  Let $G$ be $F\ot F^{-1}$ followed by swapping the inputs of two $F$'s. Then $G^2=\id$. 
  By Lemma \ref{TensorModPreserver}, $m(G)=m$. 
  By Lemma \ref{TensorW}, there exists $t\ge 1$ and two strings $u$ and $v$ (whose lengths are the same as the number of inputs of $G^{\ot t}$) such that $v=G^{\ot t}(u)$ and $c_k(v)=c_k(u)+m$. 
  We also have $G^{\ot t}(v)=u$ because $G$ is an involution.
  Let $|u|=n$. 
  
  Assume without loss of generality that the first $c_k(v)$ symbols of $v$ are $k$.
  Because we have $\CONS_{k-1,1}$, we have a gate $H:A^n\to A^n$ satisfying the following:
  \begin{enumerate}
  \item $H(u)=H^{-1}(u)=k^{c_k(u)}1^{n-c_k(u)}$;
  \item $H(v)=H^{-1}(v)=k^{c_k(v)}1^{n-c_k(v)}$.
  \end{enumerate}
  
  Let $w$ be any string of length $n$ of form $c_1,F(c_1),c_2,F(c_2),\ldots,c_t,F(c_t)$ where $c_i$ are strings of length equal to the number of inputs of $F$. 
  Then $G^{\ot t}(w)=w$. 
  
  Let $R=H\circ G^{\ot t}\circ H^{-1}$.
  Let $w^\p=H(w)$, $u^\p=H(u)$ and $v^\p=H(v)$. 
  Direct calculation shows that  $R(w^\p)=w^\p$, $R(u^\p)=v^\p$ and $R(v^\p)=u^\p$.
  
  Now we use $R$ to generate $\CC_m$.
  Let $c,x_{c_k(u)+1},\ldots,x_{c_k(v)}$ be inputs. Introduce ancilla symbols: 
  \begin{enumerate}
  \item $x_1,\ldots,x_{c_k(u)}$, initialized to $1$;
  \item $x_{c_k(v)+1},\ldots,x_n$, initialized to $k$;
  \item $y_1,\ldots,y_n$, initialized to $w^\p$;
  \item $z_1,z_2$ where $z_1$ initialized to $1$, $z_2$ initialized to $2$.
  \end{enumerate}
  The following sequence of operations implements $\CC_m$ and fixes all ancilla symbols.
  \begin{enumerate}
  \item Swap $z_1$ and $z_2$ if $c=1$ and $x_{c_k(u)+1}\ldots x_{c_k(v)}=1^m$ or $k^m$.
  \item Swap $x_1\ldots x_n$ with $y_1\ldots y_n$ if $z_1=1$. 
  \item Apply $R$ on $x_1\ldots x_n$. 
  \item Swap $x_1\ldots x_n$ with $y_1\ldots y_n$ if $z_1=1$. 
  \item Swap $z_1$ and $z_2$ if $c=1$ and $x_{c_k(u)+1}\ldots x_{c_k(v)}=1^m$ or $k^m$.
  \end{enumerate}
  
  \end{proof}
  \begin{proof}[Proof of Theorem \ref{CONSInteger}]
  Let $S$ be a gate class containing $\CONS_{k-1,1}$. Assume $S\ne \CONS_{k-1,1}$.
  Define $m=\gcd_{F\in S} m(F)$.
  By Proposition \ref{ModPreserverFromCCm}, $S$ is contained in $\la \CONS_{k-1,1}, \CC_m\ra$.
  
  We can find a finite number of gates $F_1,\ldots,F_n\in S$ such that $m=\gcd_{1\le i\le n} m(F_i)$.
  By Lemma \ref{TensorModPreserver}, $m(F_1\ot \cdots \ot F_n)=m$.
  By Proposition \ref{CCmFromModPreserver}, $\CONS_{k-1,1}$ with $F_1\ot \cdots \ot F_n$ generates $\la \CONS_{k-1,1},\CC_m\ra$.
  So $S=\la \CONS_{k-1,1},\CC_m\ra$. . 
  \end{proof}
  
  \section{Further directions}
  There are many possible directions for further research on reversible gate classes over non-binary alphabets. 
  As discussed in Remark \ref{UncountablyMany}, it remains open whether there are uncountably many reversible gate classes over a non-binary alphabet. 
  As discussed in Remark \ref{FindLatticeEmbedding}, we do not know whether there is a lattice embedding from $L_k$ to $L_{k+1}$. 
  
  
  We have seen from Theorem \ref{NonFinGenClass} that reversible gate classes over non-binary alphabets can be much more complicated than those over the binary alphabet. Therefore, it seems very hard to give a complete classification of reversible gate classes over non-binary alphabets. However, when restricted to certain kinds of gate classes, classification can be done. Theorem \ref{CONSInteger} is one result in this spirit. We expect there to be more such classification to reveal more structure of the huge lattice $L_k$. For example, it would be interesting to have a classification of gate classes containing $\CONS_{k-2,1,1}$, or of gate classes containing all one-input gates.

  \bibliographystyle{alpha}
  \bibliography{ref}

\end{document}